 \DeclareMathOperator*{\ineq}{\leq}
\newcommand{\less}{\le}
\newcommand{\links} {{\mathcal E}}
\newcommand{\sink}{d}
\newcommand{\source}{s}
\newcommand{\incoming}{{\mathcal I}}
\newcommand{\outgoing}{{\mathcal O}}
\newtheorem{lemma}{Lemma}
\newtheorem{theorem}{Theorem}
\newtheorem{claim}{Claim}
\newcommand{\cX} {{\cal X}}
\begin{document}

\title{On Secure Network Coding with Nonuniform or Restricted Wiretap Sets
  \thanks{This work has been supported in part by subcontract \#069144 issued
    by BAE Systems National Security Solutions, Inc. and supported by the
    Defense Advanced Research Projects Agency (DARPA) and the Space and Naval
    Warfare System Center (SPAWARSYSCEN), San Diego under Contracts No.
    N66001-08-C-2013 and W911NF-07-1-0029, by NSF grants CNS 0905615, CCF
    0830666, CCF 1017632, and by Caltech's Lee Center for Advanced Networking.
    The material of this paper was presented in part at the 2010 Information
    Theory and Applications Workshop, La Jolla, CA, and the 2010 IEEE
    Information Theory Workshop, Dublin, Ireland.}\vspace{-2mm}} \author{Tao
  Cui, Tracey Ho, \IEEEmembership{ Member,~IEEE}, and J{\"o}rg Kliewer,
  \IEEEmembership{ Senior Member,~IEEE} \thanks{Tao Cui and Tracey Ho are with
    the Department of Electrical Engineering, California Institute of
    Technology, Pasadena, CA 91125, USA (Email: \{taocui,tho\}@caltech.edu).}
  \thanks{J{\"o}rg Kliewer is with the Klipsch School of Electrical and Computer
    Engineering, New Mexico State University, Las Cruces, NM 88003, USA
    (Email: jkliewer@nmsu.edu).}\vspace{-5mm}}
\maketitle

\begin{abstract}
The secrecy capacity of a network, for a given collection of permissible wiretap sets, is the maximum rate of communication such that observing links in any permissible wiretap set reveals no information about the message. This paper considers secure network coding with nonuniform or restricted wiretap sets, for example, networks with unequal link capacities where a wiretapper can wiretap any subset of $k$ links, or networks where only a subset of links can be wiretapped. Existing
results show that for the case of uniform wiretap sets (networks with equal capacity links/packets where any $k$ can be wiretapped), the
secrecy capacity is given by the cut-set bound, and can be achieved by injecting $k$
random keys at the source which are decoded at the sink along with
the message. This is the case whether or not the communicating users have information about the choice of wiretap set. In contrast, we show that for the nonuniform case, the cut-set bound is not achievable in general when the wiretap set is unknown, whereas it is achievable when the wiretap set is made known.
We give achievable
strategies where random keys are canceled at intermediate non-sink
nodes, or injected at intermediate non-source nodes. Finally, we show that
determining the secrecy capacity is a NP-hard problem.

\end{abstract}

\begin{keywords}\vspace{-4mm}
Secrecy capacity, network coding,
information-theoretic security, cut-set bound, network interdiction, NP-hard.

\end{keywords}

\newpage

\section{Introduction}
Information-theoretically secure communication uses coding to ensure that an adversary that wiretaps a subset of network links obtains no information about the secure message. The secrecy capacity of a network, for a given collection of permissible wiretap sets, is defined as the maximum rate of communication such that any one of the permissible wiretap sets reveals no information about the message. In general, the choice of wiretap set is unknown to the communicating users, though we also discuss the case of known wiretap set where the encoding and decoding functions are allowed to depend on the choice of wiretap set, in which case the secrecy capacity is the maximum rate achievable under the worst case wiretap set.

A theoretical basis for information-theoretic security was given in the seminal
paper by Wyner \cite{wyner75} using Shannon's notion of perfect
secrecy \cite{shannon48}, where a coset coding scheme based on a
linear maximum distance separable code was used to achieve
security for a wiretap channel. More recently, information-theoretic
security has been studied in networks with general topologies. The secure network coding problem, where a wiretapper observes an unknown set of links,
was introduced by Cai and Yeung \cite{cai02}. They proposed a  coding strategy, which we refer to as the global key strategy, in which the source injects random key symbols that are decoded at the sink along with the message. They  showed achievability of this strategy in the nonuniform case where a wiretapper can observe one of an arbitrary given collection of wiretap link sets, and optimality of this strategy for multicast in the uniform case where each link has equal capacity and a wiretapper can observe up to  $k$ links.  For the uniform case, various
constructions of secure linear network
codes have been proposed in e.g.~\cite{feldman04,rouayheb07wiretap}. 
Other related work on secure network communication includes weakly secure codes~\cite{bhattad05weakly} and wireless erasure networks \cite{mills08}.

In this paper, we consider secure communication over wireline networks in the nonuniform case.
In the case of throughput optimization
without security requirements, the assumption that all links have
unit capacity is made without loss of generality, since links of
larger capacity can be modeled as multiple unit capacity links in
parallel.  However, in the secure communication problem, such an
assumption cannot be made without loss of generality.  Indeed, we
show in this paper that there are significant differences between the uniform and nonuniform cases. For the case of uniform wiretap sets, the multicast secrecy capacity is given by the cut set bound,
whether or not the choice of wiretap set is known, and is achieved by the global key strategy \cite{cai02}. In contrast, the nonuniform case is more complicated, even for a single source and sink.  We
show that the secrecy capacity is
not the same in general when the location of the wiretapped links is
known or unknown. We give new achievable strategies where random
keys are canceled at intermediate non-sink nodes or injected at
intermediate non-source nodes, and show that these strategies can outperform the global key strategy. Finally, we show that determining the
secrecy capacity is an NP-hard problem.

\section{Network Model and Problem Formulation}\label{sect5_2}

In this paper we focus on acyclic graphs for simplicity; we expect
that our results can be generalized to cyclic networks using the
approach in \cite{koetter01,ho03} of working over fields of rational
functions in an indeterminate delay variable.

We model a wireline
network by a directed acyclic graph
$\mathcal{G}=(\mathcal{V},\mathcal{E})$, where $\mathcal{V}$ is the
vertex set and $\mathcal{E}$ is the directed link set. Each link
$(i,j)\in \mathcal{E}$ is a noise-free bit-pipe with a given capacity $c_{i,j}$.
We denote the set of incoming links $(w,v)$ of a node $v$ by $\incoming(v)$ and
the set of outgoing links $(v,w)$ of $v$ by $\outgoing(v)$.

A source node $s\in\mathcal{V}$ observes
a random source process $X_\source$
taking values from a discrete alphabet $\cX_\source$.
A sink node $d\in\mathcal{V}$ wishes to reconstruct $X_\source$ with probability of error going to zero with the coding blocklength.

An eavesdropper can wiretap a set $\mathcal{A}$ of links chosen from a known collection $\mathcal{W}$ of possible wiretap sets. Without loss of generality we can restrict our attention to maximal wiretap sets, i.e.~no set in $\mathcal{W}$ is a subset of another. The choice of wiretap set $\mathcal{A}$ is unknown to the communicating nodes, except where otherwise specified in this paper. In the case of known wiretap set,  the wiretapper can choose an arbitrary wiretap set $\mathcal{A}$ in $\mathcal{W}$  which is then revealed to the communicating nodes.

A block code of blocklength $n$ is defined by
a mapping
$$f_{e}^{(n)}:\cX_\source^n\rightarrow\{1,\dots, 2^{nc_{e}}\},
\;e\in\outgoing(\source) $$
from $X_\source^n$ to the vector
transmitted on each outgoing link $e$
of the source  $\source$,
a mapping
$$f_{e}^{(n)}:\prod_{d\in\incoming(v)}\{1,\dots, 2^{nc_{d}}\}
\rightarrow\{1,\dots, 2^{nc_{e}}\},\;e\in\outgoing(v) $$
from the vectors received by  a non-source node $v$ to the vectors transmitted
on each outgoing link $e$ of $v$,
and a mapping
$$g_\sink^{(n)}:\prod_{d\in\incoming(\sink)}\{1,\dots,2^{nc_{d}}\}
\rightarrow\cX_\source^n $$
from  the vectors received by the sink $\sink$ to the decoded output.
Node mappings are applied in topological order;
each node receives input vectors
from all its incoming links
before applying the mappings corresponding to its outgoing links.

The secrecy
capacity is defined as the highest possible source-sink communication rate for which there exists a sequence of block codes such that the probability of decoding error at the sink goes to zero and, for any choice of $\mathcal{A}\in\mathcal{W}$, the
message communicated is information theoretically secret, i.e.~has zero
mutual information with the wiretapper's observations.

In Section~\ref{section:general} we give a cut set bound and achievable strategies for this general problem.
In  Sections \ref{section:unachievable} and \ref{section:NP}, we show that the cut set bound is unachievable and that finding the secrecy capacity is NP hard, even for the following special cases:
\begin{enumerate}
\item Scenario 1 is a wireline network with \emph{equal} link capacities,
  where the wiretapper can wiretap an unknown subset of $k$ links from
  a known collection of vulnerable network links. 
\item Scenario 2 is a wireline network with \emph{unequal} link capacities,
  where the wiretapper can wiretap an unknown subset of $k$ links from
  the entire network.  
\end{enumerate}
It is convenient to show these results for  Scenario 1 first, and then
show the corresponding results for Scenario 2,  by converting the
Scenario 1 networks considered into corresponding Scenario 2 networks
for which the same result holds.

Although, for the sake of simplicity, we focus on
single-source single-sink networks, the cut-set
bound and strategy 2 in Section~III can be easily extended to
multicast networks, whereas the results discussed in Sections~IV and~V
directly apply to both multicast and non-multicast cases since the
single-source single-sink case represents a special case for both.

\section{Cut-Set Bound and Achievable Strategies}
\label{section:general}
In this section, we consider the general wireline problem with unequal link capacities where the eavesdropper can wiretap an unknown set $\mathcal{A}$ of links chosen from a known collection $\mathcal{W}$ of possible wiretap sets. We state a cut-set upper bound on capacity, and give two new achievable strategies and examples in which they outperform the existing global key strategy. Using the combined intuition from these examples, we show in Section~\ref{section:unachievable} that the cut-set bound is unachievable in general.  One of the achievable strategies is used in Section~\ref{section:NP} to show that finding the secrecy capacity in general is NP-hard.

\subsection{Cut-Set Bound}

Let $\mathcal{S}^c$ denote the set complement of a set $\mathcal{S}$. A cut for $x,y\in \mathcal{V}$ is a partition of $ \mathcal{V}$ into
two sets $ \mathcal{V}_x$ and $ \mathcal{V}_x^c$ such
that $x\in \mathcal{V}_x$ and $y\in \mathcal{V}_x^c$. For the $x-y$
cut given by $ \mathcal{V}_x$, the cut-set
$[\mathcal{V}_x,\mathcal{V}_x^c]$ is the set of links going from
$\mathcal{V}_x$ to $\mathcal{V}_x^c$, i.e.,
\begin{equation}\label{eq5_002}
[\mathcal{V}_x,\mathcal{V}_x^c]=\left\{(u,v)|(u,v)\in
\mathcal{E},\,u\in \mathcal{V}_x,\,v\in \mathcal{V}_x^c\right\}.
\end{equation}

\begin{theorem}\label{th31}
Consider a network of point-to-point links, where link $(i,j)$ has capacity $c_{i,j}$. The secrecy capacity is upper bounded by
\begin{equation}\label{eq5_003}
\min_{\left\{\mathcal{V}_s:\,\mathcal{V}_s\text{ is an } s-d
\text{ cut}\right\}}\min_{\mathcal{A}\in\mathcal{W}}
\sum_{(i,j)\in
[\mathcal{V}_s,\mathcal{V}_s^c]\cap\mathcal{A}^c}c_{i,j}.
\end{equation}This bound applies whether or not the communicating nodes have knowledge of the chosen wiretap set $\mathcal{A}$.
\end{theorem}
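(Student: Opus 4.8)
The plan is to fix an arbitrary $s$--$d$ cut $\mathcal{V}_s$ together with an arbitrary wiretap set $\mathcal{A}\in\mathcal{W}$, prove the per-pair bound $R\le\sum_{(i,j)\in[\mathcal{V}_s,\mathcal{V}_s^c]\cap\mathcal{A}^c}c_{i,j}$ on any achievable secure rate $R$, and then minimize over both $\mathcal{V}_s$ and $\mathcal{A}$. Write $C=[\mathcal{V}_s,\mathcal{V}_s^c]$ for the forward cut-set, $C_{\mathcal{A}}=C\cap\mathcal{A}$ and $C_{\mathcal{A}^c}=C\cap\mathcal{A}^c$ for its wiretapped and non-wiretapped parts, and for a link set $\mathcal{B}$ let $Y_{\mathcal{B}}$ denote the tuple of blocklength-$n$ vectors carried on the links of $\mathcal{B}$. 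Let $W=X_\source^n$ be the source block with $H(W)=nR$.

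Two facts drive the argument. First, \emph{decodability forces the message through the cut}: since every non-source node applies a deterministic map to its received data (all private randomness being generated at the source), a topological-order induction over the nodes of $\mathcal{V}_s^c$ shows that every vector computed on the sink side, and in particular the sink input $Y_{\incoming(\sink)}$, is a deterministic function of the forward cut vectors $Y_C$ alone. Composing with the decoder and invoking Fano's inequality then gives $H(W\mid Y_C)\le n\epsilon_n$ with $\epsilon_n\to0$. Second, \emph{secrecy neutralizes the wiretapped cut links}: because $C_{\mathcal{A}}\subseteq\mathcal{A}$, the tuple $Y_{C_{\mathcal{A}}}$ is a sub-collection of the wiretapper's observation $Y_{\mathcal{A}}$, so monotonicity of mutual information yields $I(W;Y_{C_{\mathcal{A}}})\le I(W;Y_{\mathcal{A}})=0$.

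I would then combine these by expanding $H(W)$ around the wiretapped cut links via the chain rule:
\begin{equation}
nR=H(W)=\underbrace{I(W;Y_{C_{\mathcal{A}}})}_{=0}+I(W;Y_{C_{\mathcal{A}^c}}\mid Y_{C_{\mathcal{A}}})+H(W\mid Y_C).
\end{equation}
The middle term is at most $H(Y_{C_{\mathcal{A}^c}})\le n\sum_{(i,j)\in C_{\mathcal{A}^c}}c_{i,j}$ by the per-link capacity constraints, and the last term is at most $n\epsilon_n$; dividing by $n$ and letting $n\to\infty$ gives the per-pair bound. Taking the minimum over all cuts and over all $\mathcal{A}\in\mathcal{W}$ yields the claimed bound \eqref{eq5_003}. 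For the unknown-wiretap-set case a single code must satisfy the above simultaneously for every $\mathcal{A}$, whereas in the known case the code may depend on $\mathcal{A}$ and the worst-case rate is the minimum over $\mathcal{A}$ of the per-$\mathcal{A}$ bounds; either way the same expression \eqref{eq5_003} results, which is why the bound holds regardless of whether $\mathcal{A}$ is known.

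The main obstacle is the first ingredient — certifying that the sink's view is a function of $Y_C$ only. This is exactly where the assumption of deterministic interior maps (equivalently, all randomness residing at the source) is essential: if fresh key randomness were allowed at sink-side interior nodes, a key injected there and routed backward across the cut could let the sink decode while the forward cut carried only a one-time-padded, hence information-free, copy of $W$, violating the bound. Care is therefore needed to verify that backward cut links (from $\mathcal{V}_s^c$ to $\mathcal{V}_s$) never appear as inputs to sink-side nodes and so cannot smuggle sink-side information into $Y_{\incoming(\sink)}$; the topological induction handles this cleanly. The remaining steps are routine applications of Fano's inequality, monotonicity of mutual information, and the per-link entropy bound $H(Y_e)\le nc_e$.
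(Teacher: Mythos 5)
Your argument is correct and follows essentially the same route as the paper's proof: fix a cut $\mathcal{V}_s$ and a wiretap set $\mathcal{A}$, combine the secrecy condition on the wiretapped cut links, Fano's inequality applied across the cut, and the per-link capacity bound on the remaining cut links, then minimize over $\mathcal{V}_s$ and $\mathcal{A}$; your chain-rule expansion of $H(W)$ is just a rearrangement of the paper's chain $nR_s\le H(M\mid\mathbf{Z}^n)\le I(M;\mathbf{Y}^n\mid\mathbf{Z}^n)+n\epsilon_n\le H(\mathbf{X}^n\mid\mathbf{Z}^n)+n\epsilon_n$. The only substantive difference is that you make explicit the topological-order induction (and the reliance on deterministic interior node maps) needed to justify $H(W\mid Y_C)\le n\epsilon_n$, a step the paper leaves implicit in its appeal to Fano's inequality.
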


\begin{proof} Consider any source-sink cut $\mathcal{V}_s$ and any wiretap set
$\mathcal{A}\in\mathcal{W}$. Denote by
$\mathbf{X}$ the transmitted signals from nodes in $\mathcal{V}_s$
over links in $[\mathcal{V}_s,\mathcal{V}_s^c]$ and denote by
$\mathbf{Y}$ and $\mathbf{Z}$  the observed signals from links  in $[\mathcal{V}_s,\mathcal{V}_s^c]$ and in
$[\mathcal{V}_s,\mathcal{V}_s^c]\cap\mathcal{A}$
respectively. 
We consider block coding with block
length $n$ and secret message rate $R_s$. By the perfect secrecy requirement
$H(M|\mathbf{Z}^n)=H(M)$  we have
\begin{equation}\label{eq5_018}
\begin{split}
nR_s\leq & H(M|\mathbf{Z}^n)\\
\ineq^{(a)} & H(M|\mathbf{Z}^n)-H(M|\mathbf{Y}^n)+n\epsilon_n\\
= & H(M|\mathbf{Z}^n)-H(M|\mathbf{Y}^n,\mathbf{Z}^n)+n\epsilon_n\\
= & I(M;\mathbf{Y}^n|\mathbf{Z}^n)+n\epsilon_n\\
\ineq^{(b)} & I(\mathbf{X}^n;\mathbf{Y}^n|\mathbf{Z}^n)+n\epsilon_n\\
= &H(\mathbf{X}^n|\mathbf{Z}^n)-H(\mathbf{X}^n|\mathbf{Y}^n,\mathbf{Z}^n)+n\epsilon_n\\
\less &H(\mathbf{X}^n|\mathbf{Z}^n)+n\epsilon_n\\
\less &n\sum_{(i,j)\in
[\mathcal{V}_s,\mathcal{V}_s^c]\cap\mathcal{A}^c}c_{i,j}+n\epsilon_n,
\end{split}
\end{equation}
where $\epsilon_n\rightarrow0$ as $n\rightarrow +\infty$ and
\newcounter{Lcount1}
  \begin{list}{$(\alph{Lcount1})$}
    {\usecounter{Lcount1}
    \setlength{\rightmargin}{\leftmargin}}
  \item is due to Fano's inequality;

  \item is due to the data processing inequality and the fact that
    $M\rightarrow\mathbf{X}^n\rightarrow
    \mathbf{Y}^n\rightarrow\mathbf{Z}^n$ forms a Markov chain;

  \end{list}
\end{proof}
If the choice of wiretap set $\mathcal{A}$ is known to the communicating nodes, the cut-set bound in this case is also (\ref{eq5_003}), which is achievable using a network code that does not send any flow on links in $\mathcal{A}$. In contrast, we show in Section~\ref{section:unachievable} that the cut-set bound is not achievable in general when the wiretap set $\mathcal{A}$ is unknown.

\subsection{Achievable Strategies for Unknown Wiretap Set} \label{sect5_2_3}
In the case of unit link capacities, the secrecy capacity can be
achieved using global keys generated at the source and decoded at
the sink \cite{cai02}. The source transmits $R_s$ secret information
symbols and $R_w$ random key symbols, where $R_s+R_w$ is equal to the
min-cut of the network. This scheme does not achieve capacity in
general networks with unequal link capacities. Intuitively, this is
because the total rate of random keys is limited by the min cut from
the source to the sink, whereas more random keys may be required to
fully utilize large capacity cuts with large capacity links.

Capacity can be improved by using a combination of local and global
random keys. A local key is injected at a non-source node and/or
canceled at a non-sink node.  However, it is complicated to optimize
over all possible combinations of nodes at which keys are injected
and canceled.  Thus, we propose the following more tractable
constructions, which we will use to develop further results in subsequent sections.


\emph{Strategy 1: Random Keys Injected by Source and Possibly
Canceled at Intermediate Nodes}

\begin{figure}
\centering
  \includegraphics[scale=0.9]{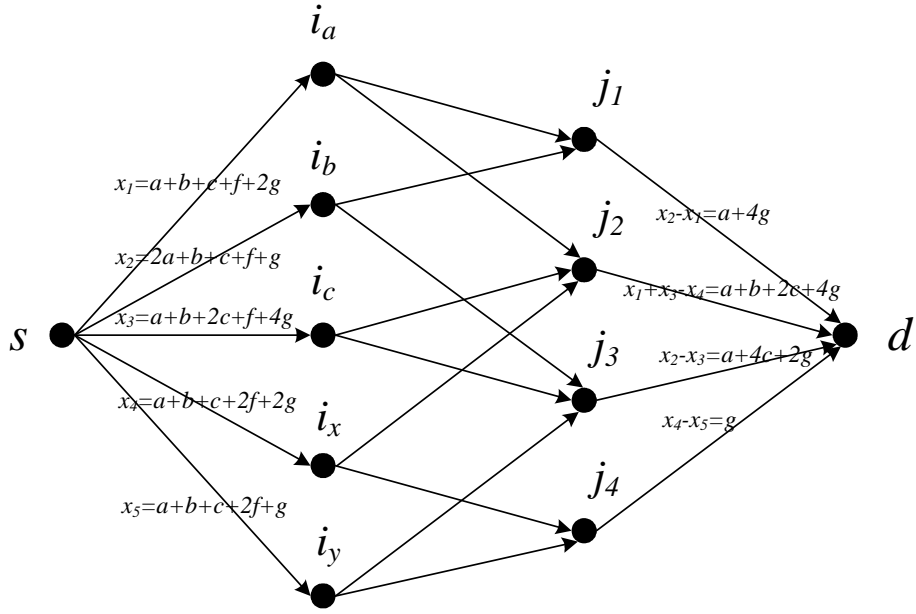}\\
  \caption{An example of Strategy 1, where any two of the five links in the first layer can be wiretapped.  Capacity 3 is achieved by canceling at the second layer one of the two random keys injected  by the source. The operation is on a finite field $GF(5)$.}\label{f3_1}
\end{figure}

Our first construction achieves secrecy with random keys injected only at the source. The source carries out pre-coding so that random keys are canceled at intermediate nodes and the sink receives the intended message without interference from the random keys. As such, it applies in the single-source, single-sink case, and is useful in networks where the incoming capacity of the sink is too small to accommodate the message plus all the keys needed in the network.
An example is given in
Fig.~\ref{f3_1}, where each link has unit capacity, the number of wiretapped links is $k=2$, and
only the first layer of the three layer network is allowed to be
wiretapped.
The secret message rate $R_s=3$ is achievable by using the strategy in Fig.~\ref{f3_1}, where the operation is on a finite field $GF(5)$. In Fig.~\ref{f3_1}, $a,b,c$ are secret messages and $f,g$ are keys. The message on the $i$-th link in the first layer is denoted as $x_i$, $i=1,2,3,4,5$. The key $f$ is canceled at the second layer and the key cancelation scheme is labeled on the last layer links. It is easy to see that $H(x_i,x_j|a,b,c)=2$, $\forall i\neq j$ which means perfect secrecy is achieved. At the same time, the sink $d$ can decode $a,b,c$ and the key $g$.
When key
cancelation is not applied, let $R_s$ and $R_w$ be the secrecy rate and
the random key rate at the source, respectively. Let $z$ be the
total rate of transmission on the first layer. To achieve secrecy, we must
have $R_w\geq \frac{2}{5}z$, where the cut-set condition on the first
layer requires $R_s+R_w\leq z$.  Since the sink needs to decode both
message and random key symbols from the source, the cut-set
condition on the last layer requires $R_s+R_w\leq 4$. Combining these we
obtain $R_s\leq \max_z\min(4-\frac{2}{5}z,\;\frac{3}{5}z) =\frac{12}{5}$, which is strictly less than 3.

To formally develop the Strategy 1 construction, we will use the following result:
\begin{claim}[{\cite[Corollary
19.21]{yeung08}}] Given an acyclic network, there exists, for a sufficiently large finite field, a linear network code in which the dimension of the received subspace at each non-source node $t$ is $\min (\omega,\mbox{maxflow} (t))$, where $\omega$ is the dimension of the message subspace.
\label{Raymond}
\end{claim}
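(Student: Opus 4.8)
The plan is to prove the two inequalities separately: first that the received subspace dimension at each non-source node $t$ is at most $\min(\omega,\mbox{maxflow}(t))$ for \emph{every} linear code, and then to exhibit a single code attaining this bound \emph{simultaneously} at all non-source nodes. I would set up the usual algebraic framework for acyclic linear network coding: model the source as emitting $\omega$ linearly independent symbols, so that each link carries a linear combination of these symbols described by a global encoding vector in $\mathbb{F}_q^\omega$, where these vectors are determined by the local coding coefficients assigned at each node (well-defined and computable in topological order since the network is acyclic). The received subspace at a node $t$ is then the span of the global encoding vectors on its incoming links.

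The upper bound is immediate. This span lies in $\mathbb{F}_q^\omega$, so its dimension is at most $\omega$; and for any source--$t$ cut, the global vectors on $t$'s incoming links are linear combinations of those crossing the cut, so the dimension is at most the cut size, hence at most $\mbox{maxflow}(t)$ by the max-flow min-cut theorem. This holds for any choice of coding coefficients.

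The substantive part is achievability. For a fixed non-source node $t$, let $r_t := \min(\omega,\mbox{maxflow}(t))$. The event that $t$'s received subspace has the maximal dimension $r_t$ is equivalent to the nonvanishing of some $r_t\times r_t$ minor of the matrix whose columns are the global encoding vectors on $t$'s incoming links; call this minor $P_t$, a polynomial in the local coding coefficients. I would show $P_t$ is not identically zero by producing one coefficient assignment under which it is nonzero: by Menger's theorem there exist $r_t$ edge-disjoint paths from the source to $t$, and routing one distinct source symbol along each path (with zero coefficients off these paths) delivers $r_t$ linearly independent global vectors to $t$, making the corresponding minor nonzero. Thus each $P_t$ is a nonzero polynomial.

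Finally, I would combine the nodes. Treating the coding coefficients as indeterminates, the product $P=\prod_t P_t$ over all non-source nodes is a product of nonzero polynomials in the integral domain of polynomials over $\mathbb{F}_q$, hence itself a nonzero polynomial of bounded degree. By the Schwartz--Zippel lemma, for a finite field $\mathbb{F}_q$ with $q$ exceeding this degree there is an assignment of the coding coefficients in $\mathbb{F}_q$ making $P\neq 0$, i.e.\ making every $P_t\neq 0$ simultaneously; the resulting linear network code achieves received dimension $r_t$ at each non-source node. The main obstacle is precisely this simultaneity: individual achievability at one node is the easy Menger-paths argument, but guaranteeing that one common code works at every node is what the large-field/Schwartz--Zippel step buys, and it is the source of the ``sufficiently large finite field'' hypothesis in the statement.
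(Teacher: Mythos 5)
The paper does not prove this claim at all: it is imported verbatim as Corollary 19.21 of the cited reference (Yeung, \emph{Information Theory and Network Coding}) and used as a black box, so there is no in-paper proof to compare against. Your argument is correct and is essentially the standard proof given in that reference and in the Koetter--M\'edard line of work: the cut-based upper bound, Menger's theorem plus a routing assignment to show that each node's determinant polynomial is not identically zero, and a product-of-nonzero-polynomials / sufficiently-large-field argument to achieve all the rank conditions simultaneously, which is exactly where the field-size hypothesis enters.
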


\begin{figure}
\centering
  \includegraphics[scale=0.7]{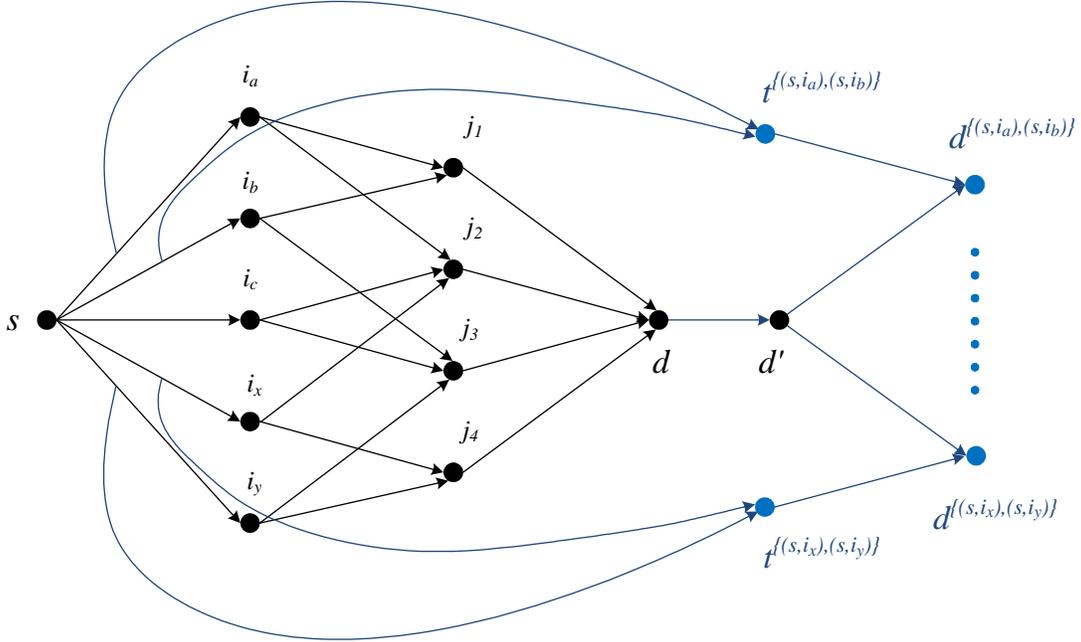}\\
  \caption{Illustration of Strategy 1, an achievable construction where random keys are injected by the source and
  possibly canceled at intermediate nodes. In this figure, $k=2$ and only the 5 links in the first layer can be wiretapped.}\label{f3}
\end{figure}
Let $R_s$ denote the secret message rate and $z_{i,j}$ the transmission rate on each network link $(i,j)\in\links$, whose values we will discuss how to choose below.  Consider the graph
$\mathcal{G}$ with the capacity of each link $(i,j)\in\links$ set as
$z_{i,j}\le c_{i,j}$.
As
illustrated in Fig.~\ref{f3}, augment the graph as follows:\begin{itemize}\item Connect each subset of links $\mathcal{A}\in \mathcal{W}$ to a virtual node $t^{\mathcal{A}}$: more precisely, for each directed link $(i,j)\in\links$ in the network, create a node $v_{i,j}$ and replace $(i,j)$ by two links $(i,v_{i,j})$ and $(v_{i,j},j)$ of capacity $z_{i,j}$, and for each $(i,j)\in\mathcal{A}$ create a link $(v_{i,j},t^{\mathcal{A}})$ of capacity $v_{i,j}$. Let $R_{s\rightarrow \mathcal{A}}$ be the max flow/min cut capacity between $s$ and $t^{\mathcal{A}}$. \item Add a virtual sink node $d^\prime$ and join the actual sink $d$ to $d^\prime$ by a link $(d,d^\prime)$ of
capacity of $R_s$.
\item Connect
both $t^{\mathcal{A}}$ and the virtual sink $d^\prime$ to a virtual sink
$d^{\mathcal{A}}$ by adding a link $(t^{\mathcal{A}},d^{\mathcal{A}})$ of  capacity
$R_{s\rightarrow \mathcal{A}}$  and a link $(d^\prime,d^{\mathcal{A}})$ of capacity $R_s$,  respectively.\end{itemize}

The source sends a secret
message $\mathbf{v}=[v_1,\ldots,v_{R_s}]^T$  along with
$R_w$ random key symbols $\mathbf{w}=[w_1,\ldots,w_{R_w}]^T$.\footnote{We
assume that $R_s$ and $R_{s\rightarrow \mathcal{A}}$ are integers,
which can be approximated arbitrarily closely by scaling the
capacity of all links by the same factor.} The values of $R_s$, $R_w$, and $z_{i,j}$ are chosen such
that each virtual sink $d^{\mathcal{A}}$ can decode
$R_s+R_{s\rightarrow \mathcal{A}}$ linear combinations of message and random key symbols, and the sink $\sink$ can decode the $R_s$ message
symbols.
Specifically, if for each $\mathcal{A}$ the rate $R_s+R_{s\rightarrow \mathcal{A}}$ equals the min-cut capacity
between the source and the virtual sink $d^{\mathcal{A}}$ and
$R_{s\rightarrow \mathcal{A}}\leq R_w$, by using Claim~\ref{Raymond}, there exists a network code such that each
$d^\mathcal{A}$ receives $R_s+R_{s\rightarrow \mathcal{A}}$ linearly
independent combinations of $\mathbf{v}$ and $\mathbf{w}$ when the
finite field size is sufficiently large ($q>{|\mathcal{E}|
\choose k}$). Let the signals received at
a particular virtual sink $d^\mathcal{B}$ be denoted as
$\mathbf{M}_{\mathcal{B}}[\mathbf{v}^T,\mathbf{w}^T]^T$, where
$\mathbf{M}_{\mathcal{B}}$ is an $(R_s+R_{s\rightarrow \mathcal{B}})\times (
R_s+R_w)$ received coding matrix with full row rank. We can add
$R_w-R_{s\rightarrow \mathcal{B}}$ rows to
$\mathbf{M}_{\mathcal{B}}$ to get a full rank
$(R_s+R_w)\times(R_s+R_w)$ square matrix
$\tilde{\mathbf{M}}_{\mathcal{B}}$. We then precode the secret
message and keys using $\tilde{\mathbf{M}}_{\mathcal{B}}^{-1}$,
i.e., the source transmits
$\tilde{\mathbf{M}}_{\mathcal{B}}^{-1}[\mathbf{v}^T,\mathbf{w}^T]^T$, so that link $(d^\prime,d^{\mathcal{B}})$ carries $\mathbf{v}$.

\begin{claim}Strategy 1 allows the sink to decode the message $\mathbf{v}$ and achieves perfect secrecy.\end{claim}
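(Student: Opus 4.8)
The plan is to prove the two assertions—decodability at the sink $d$ and perfect secrecy against every $\mathcal{A}\in\mathcal{W}$—separately, the main effort being to show that the \emph{single} precoding $\tilde{\mathbf{M}}_{\mathcal{B}}^{-1}$ chosen for one set $\mathcal{B}$ simultaneously delivers both, uniformly over all wiretap sets. I would first record the structural consequences of Claim~\ref{Raymond}: since $q>\binom{|\mathcal{E}|}{k}$, there is one linear code on the augmented graph for which every $t^{\mathcal{A}}$ has received dimension $R_{s\rightarrow\mathcal{A}}$ and every $d^{\mathcal{A}}$ has received dimension $R_s+R_{s\rightarrow\mathcal{A}}$, these being the respective maxflows and both at most $\omega=R_s+R_w$ by the stated choice of $R_s,R_w,z_{i,j}$ (in particular $R_{s\rightarrow\mathcal{A}}\le R_w$). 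Because precoding applies only an invertible map to the source vector $[\mathbf{v}^T,\mathbf{w}^T]^T$, these received dimensions are invariant under it. I would also take, without loss of generality, the relay $d^\prime$ to replicate its single $R_s$-dimensional input on $(d,d^\prime)$ onto each outgoing link $(d^\prime,d^{\mathcal{A}})$, so all these links carry the same signal.

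For decodability, the precoding makes the received vector at $d^{\mathcal{B}}$ equal $[\mathbf{v}^T,w_1,\ldots,w_{R_{s\rightarrow\mathcal{B}}}]^T$, whose $\mathbf{v}$-part arrives on $(d^\prime,d^{\mathcal{B}})$. Hence the signal leaving $d^\prime$—and therefore the $R_s$-dimensional signal carried on $(d,d^\prime)$—equals $\mathbf{v}$; as $d^\prime$'s only input comes from $d$, the sink $d$ must already be able to produce $\mathbf{v}$, i.e.\ it decodes the message. By replication, every link $(d^\prime,d^{\mathcal{A}})$ then also carries exactly $\mathbf{v}$.

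For secrecy I would fix an arbitrary $\mathcal{A}$ and write the post-precoding received matrix at $d^{\mathcal{A}}$ in block form $\left[\begin{smallmatrix}\mathbf{I}_{R_s}&\mathbf{0}\\ \mathbf{G}_{\mathcal{A},v}&\mathbf{G}_{\mathcal{A},w}\end{smallmatrix}\right]$, where the top block is the $\mathbf{v}$-carrying rows from $(d^\prime,d^{\mathcal{A}})$ and the bottom $R_{s\rightarrow\mathcal{A}}$ rows $\mathbf{G}_{\mathcal{A}}=[\mathbf{G}_{\mathcal{A},v}\,|\,\mathbf{G}_{\mathcal{A},w}]$ are the combinations forwarded from $t^{\mathcal{A}}$. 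Claim~\ref{Raymond} forces this matrix to have rank $R_s+R_{s\rightarrow\mathcal{A}}$; clearing the top block by row operations yields $\mathrm{rank}(\mathbf{G}_{\mathcal{A},w})=R_{s\rightarrow\mathcal{A}}$, so the key submatrix has full row rank. Writing the forwarded signal as $\mathbf{Z}'_{\mathcal{A}}=\mathbf{G}_{\mathcal{A},v}\mathbf{v}+\mathbf{G}_{\mathcal{A},w}\mathbf{w}$ with $\mathbf{w}$ uniform and independent of $\mathbf{v}$, the conditional law of $\mathbf{Z}'_{\mathcal{A}}$ given $\mathbf{v}$ is then uniform on $GF(q)^{R_{s\rightarrow\mathcal{A}}}$ and independent of $\mathbf{v}$, so $I(\mathbf{v};\mathbf{Z}'_{\mathcal{A}})=0$.

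It remains to pass from $\mathbf{Z}'_{\mathcal{A}}$ (what $t^{\mathcal{A}}$ relays) to the wiretapper's actual observation $\mathbf{Z}_{\mathcal{A}}$ on the links of $\mathcal{A}$, which I would do by showing the two span the same space of linear combinations of $[\mathbf{v}^T,\mathbf{w}^T]^T$: the relayed rows lie in the span of the wiretapped signals, since $t^{\mathcal{A}}$ is a function of them, while both spans have dimension $R_{s\rightarrow\mathcal{A}}$—the first by the full-rank count above, the second from the received dimension of $t^{\mathcal{A}}$ in Claim~\ref{Raymond}. Equal-dimensional nested spans coincide, so $\mathbf{Z}_{\mathcal{A}}$ and $\mathbf{Z}'_{\mathcal{A}}$ are invertible linear functions of each other and $I(\mathbf{v};\mathbf{Z}_{\mathcal{A}})=I(\mathbf{v};\mathbf{Z}'_{\mathcal{A}})=0$, giving perfect secrecy for every $\mathcal{A}\in\mathcal{W}$. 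The main obstacle, and the crux of the argument, is exactly this decoupling: although the precoding is tailored to one set $\mathcal{B}$, the replication at $d^\prime$ together with the precoding-invariant dimension counts of Claim~\ref{Raymond} force $\mathbf{G}_{\mathcal{A},w}$ to be full row rank simultaneously for \emph{all} $\mathcal{A}$, which is what converts "message-plus-observation jointly full rank at $d^{\mathcal{A}}$" into "observation statistically independent of the message."
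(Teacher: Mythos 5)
Your proof is correct and follows essentially the same route as the paper: decodability from the capacity bottleneck at $(d,d^\prime)$, and secrecy from the full row rank of the precoded received matrix $\mathbf{M}_{\mathcal{A}}\tilde{\mathbf{M}}_{\mathcal{B}}^{-1}$ at each virtual sink, which forces the wiretap subspace to be linearly independent of the message subspace. You merely spell out two steps the paper leaves implicit — the row-reduction showing the key submatrix $\mathbf{G}_{\mathcal{A},w}$ has full row rank (hence uniform keys give statistical independence), and the identification of the wiretapper's observation with the signal relayed through $t^{\mathcal{A}}$ via equal-dimensional nested spans.
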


\begin{proof}Since $(d,d^\prime)$ is the only incoming link of $(d^\prime,d^{\mathcal{B}})$, and both links have capacity $R_s$ which is equal to the rate of the message $\mathbf{v}$, link  $(d,d^\prime)$ carries exactly $\mathbf{v}$. This implies that sink $d$ receives $\mathbf{v}$. Furthermore,
for any virtual sink $d^\mathcal{A}$, the received coding matrix
with precoding is
$\mathbf{M}_{\mathcal{A}}\tilde{\mathbf{M}}_{\mathcal{B}}^{-1}$,
which is a full row rank matrix. As
$\mathbf{M}_{\mathcal{A}}\tilde{\mathbf{M}}_{\mathcal{B}}^{-1}$ is a
full row rank matrix, the coding vectors of the received signals from
the set $\mathcal{A}$ of wiretapping links span a rank
$R_{s\rightarrow \mathcal{A}}$ subspace that is linearly independent
of the set of coding vectors corresponding to message $\mathbf{v}$ received on
$(d^\prime,d^{\mathcal{A}})$. Therefore, the signals received on $\mathcal{A}$ are independent of the message $\mathbf{v}$, and perfect secrecy is achieved.\end{proof}

Note that applying
$\tilde{\mathbf{M}}_{\mathcal{B}}^{-1}$ causes the random keys injected by
the source to be either canceled at intermediate nodes or
decoded by the sink.

It remains to optimize over values of $R_s$, $R_w$ and $z_{i,j}$ such
that  for each $\mathcal{A}$ the rate $R_s+R_{s\rightarrow \mathcal{A}}$ equals the min-cut capacity
between $s$ and $d^{\mathcal{A}}$ and
$R_{s\rightarrow \mathcal{A}}\leq R_w$. Since computing $R_{s\rightarrow \mathcal{A}}$ (the min-cut capacity between $s$ and $t^{\mathcal{A}}$) for arbitrary $z_{i,j}$ involves a separate max flow computation, to simplify the optimization, we can constrain  $R_{s\rightarrow \mathcal{A}}$ to be equal to some upper bound $U_\mathcal{A}$ on  $R_{s\rightarrow \mathcal{A}}$, and thereby obtain an achievable
secrecy rate using Strategy 1. For instance, we can take $U_\mathcal{A}$ to be $\sum_{(i,j)\in
\mathcal{A}}z_{i,j}$, or alternatively take $U_\mathcal{A}$ to be the min-cut capacity between $s$ and $t^{\mathcal{A}}$ on the graph with the original link capacities $c_{i,j}$. We can write a linear program (LP)  for
this key cancelation strategy as follows:%
\begin{equation}\label{eq321_1}
\begin{split}
\max \text{ }&R_s\\
\text{subject to } &\sum_{(i,j)\in
\mathcal{E}}f_{i,j}^{\mathcal{A}}-\sum_{(i,j)\in
\mathcal{E}}f_{j,i}^{\mathcal{A}}=\left\{
                                    \begin{array}{cc}
                                      R_s+U_\mathcal{A}, & \text{if } i=s, \\
                                      -R_s-U_\mathcal{A}, & \text{if } i=d^{\mathcal{A}}, \\
                                      0, & \text{otherwise}, \\
                                    \end{array}
                                  \right.\\
& \hspace{85mm}\forall \mathcal{A}\in \mathcal{W},\\
& f_{i,j}^{\mathcal{A}}\leq z_{i,j}\leq c_{i,j},\, \forall (i,j)\in
\mathcal{E},
\end{split}
\end{equation}
where $f_{i,j}^{\mathcal{A}}$ represents the rate of flow on link $(i,j)$
intended for the virtual sink $d^\mathcal{A}$. The conditions on conservation of flow $f_{i,j}^{\mathcal{A}}$ ensure that the min cut between the source and $d^\mathcal{A}$ is at least $R_s+U_\mathcal{A}$. Since the only incoming links of $d^\mathcal{A}$ are $(t^{\mathcal{A}},d^{\mathcal{A}})$ of  capacity
$R_{s\rightarrow \mathcal{A}}$  and  $(d^\prime,d^{\mathcal{A}})$ of capacity $R_s$, this implies that $R_{s\rightarrow \mathcal{A}}$ equals the upper bound $U_\mathcal{A}$.  Thus, the optimal value of (\ref{eq321_1}) gives an
achievable secrecy rate.

\emph{Strategy 2: Random Keys Injected by Source and/or Intermediate
Nodes and Decoded at Sink}

In strategy 2, any node in the network can inject random keys.  The sink is required to decode both the secret message and the random keys from all nodes, i.e.~keys are not canceled within the network, while the random key rates must be sufficient to ``fill'' each wiretap set (in a sense that is made precise below). Although for simplicity of notation the algorithm description below is for the single-source, single-sink case, this strategy applies directly to multiple-source multicast case. If random keys are injected only at the source, the strategy reduces to the global key strategy in \cite{cai02}. Note that under the assumption that only the source knows the message and different
nodes do not have common randomness, here we cannot apply the key
cancelation and precoding idea from Strategy 1, since after applying the
precoding matrix each node may potentially be required to transmit a
mixture of the source message and other nodes' random keys.

Let $R_{w,v}$ be the random key injection
rate at node $v$. As before, $R_s$ denotes the secret message rate at the source and $z_{i,j}$ the transmission rate on link $(i,j)$.   We will address the choice of these rates below.  Consider the graph
$\mathcal{G}$ with the capacity of each link $(i,j)\in\links$ set as $z_{i,j}$. Connect each subset of links $\mathcal{A}\in \mathcal{W}$ to a virtual node $d^{\mathcal{A}}$: more precisely, for each directed link $(i,j)\in\links$ in the network, create a node $v_{i,j}$ and replace $(i,j)$ by two links $(i,v_{i,j})$ and $(v_{i,j},j)$ of capacity $z_{i,j}$, and for each $(i,j)\in\mathcal{A}$ create a link $(v_{i,j},d^{\mathcal{A}})$ of capacity $v_{i,j}$. Intuitively, we want the max flow/min cut capacity from the message and random key sources to $d^{\mathcal{A}}$ to be equal to that in the absence of the message. Similarly to strategy 1, we simplify the optimization by constraining this max flow/min cut capacity to be equal to an upper bound, $\sum_{(i,j)\in
\mathcal{A}} z_{i,j}$.  Specifically, we
have the following LP:
\begin{equation}\label{eq5_023}
\begin{split}
\max \text{ }&R_s\\
\text{subject to }& \sum_{j}f^{\mathcal{A}}_{i,j}-\sum_{j}f^{\mathcal{A}}_{j,i}\;\;\left\{%
\begin{array}{ll} =
  -\sum_{(i',j')\in
\mathcal{A}} z_{i',j'}, & \text{if }i=d^{\mathcal{A}}, \\
  \le R_{w,i},& \mathrm{otherwise},\\
\end{array}%
\right.\forall \mathcal{A}\in \mathcal{W},\\
& \sum_{j}f^d_{i,j}-\sum_{j}f^d_{j,i}=\left\{%
\begin{array}{ll}
  -\left(R_s+\sum_{v\in \mathcal{V},v\neq d} R_{w,v}\right), & \text{if }i=d, \\
  R_s+R_{w,s}, & \text{if }i=s, \\
  R_{w,i}, & \mathrm{otherwise},\\
\end{array}%
\right.\\
& f^{\mathcal{A}}_{i,j}
\leq z_{i,j},\quad f^{d}_{i,j}
\leq z_{i,j},\quad z_{i,j}\leq c_{i,j},\, \forall (i,j)\in
\mathcal{E},
\end{split}
\end{equation}
where the first set of equations corresponds to the requirement that 
the network accommodates a flow $f^{\mathcal{A}}$ of size $\sum_{(i,j)\in
\mathcal{A}} z_{i,j}$ from the random key sources to $d^{\mathcal{A}}$, the second set of equations corresponds to the requirement that the network accommodates a flow $f^{d}$, of size equal to the sum of the message and random key rates, from the message and random key sources to the sink $d$,
and the third set of inequalities corresponds to the link capacity constraints.

\begin{claim}Strategy 2 allows the sink to decode the message $\mathbf{v}$, and achieves perfect secrecy.\end{claim}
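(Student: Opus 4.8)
The plan is to exhibit a single linear network code over a sufficiently large field $GF(q)$ that simultaneously guarantees decodability at $\sink$ and perfect secrecy against every $\mathcal{A}\in\mathcal{W}$, with the two sets of flow constraints in (\ref{eq5_023}) supplying the required max-flow values. I would treat the $R_s$ message symbols $\mathbf{v}$ together with the $R_{w,v}$ key symbols injected at each node $v$ as mutually independent, uniformly distributed sources, and regard them as emanating from a virtual super-source joined to $s$ by a link of capacity $R_s+R_{w,s}$ and to each other node $v$ by a link of capacity $R_{w,v}$, so that the total message-plus-key dimension is $\omega=R_s+\sum_{v\in\nodes}R_{w,v}$.

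First I would establish decodability. The second set of flow-conservation constraints in (\ref{eq5_023}) describes a feasible flow $f^{d}$ of value $\omega$ from the super-source to $\sink$, so the max-flow from the super-source to $\sink$ is at least $\omega$. Applying Claim~\ref{Raymond} with the combined sources as the message subspace, there is a code in which $\sink$ receives a subspace of dimension $\min(\omega,\mbox{maxflow}(\sink))=\omega$; hence the received coding matrix at $\sink$ has full column rank, and the sink inverts it to recover all message and key symbols, in particular $\mathbf{v}$.

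Next I would establish secrecy for a fixed $\mathcal{A}$. The first set of constraints in (\ref{eq5_023}) describes a feasible flow $f^{\mathcal{A}}$ of value $\sum_{(i,j)\in\mathcal{A}}z_{i,j}$ that draws at most $R_{w,i}$ from each node and terminates at $d^{\mathcal{A}}$; thus, using the key sources \emph{alone}, the max-flow to $d^{\mathcal{A}}$ is at least $\sum_{(i,j)\in\mathcal{A}}z_{i,j}$, while the total capacity of links entering $d^{\mathcal{A}}$ is exactly $\sum_{(i,j)\in\mathcal{A}}z_{i,j}$, so the two coincide. Writing the wiretapper's observation as $\mathbf{Z}_{\mathcal{A}}=[\mathbf{A}_{\mathcal{A}}\,|\,\mathbf{B}_{\mathcal{A}}][\mathbf{v}^T,\mathbf{w}^T]^T$, the key submatrix $\mathbf{B}_{\mathcal{A}}$ is the restriction of the $d^{\mathcal{A}}$ coding matrix to the key coordinates, so the max-flow equality forces $\mathbf{B}_{\mathcal{A}}$ to have row rank equal to its number of rows $\sum_{(i,j)\in\mathcal{A}}z_{i,j}$. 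Consequently $\mbox{rank}(\mathbf{B}_{\mathcal{A}})=\mbox{rank}([\mathbf{A}_{\mathcal{A}}\,|\,\mathbf{B}_{\mathcal{A}}])$, and since $\mathbf{v},\mathbf{w}$ are uniform and independent, $I(\mathbf{v};\mathbf{Z}_{\mathcal{A}})=(\mbox{rank}([\mathbf{A}_{\mathcal{A}}\,|\,\mathbf{B}_{\mathcal{A}}])-\mbox{rank}(\mathbf{B}_{\mathcal{A}}))\log q=0$, which is the desired perfect-secrecy condition.

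The main obstacle is that decodability needs a large received subspace from the \emph{full} source set while secrecy needs prescribed subspace dimensions from the key sources \emph{alone}, for every $\mathcal{A}\in\mathcal{W}$ at once, whereas Claim~\ref{Raymond} addresses only one message subspace at a time. I would overcome this by using a generic (random) linear code and invoking the standard fact that, for generic coefficients, the dimension of the subspace induced at any node by any specified subset of sources equals the corresponding max-flow. Each of the finitely many rank conditions (one at the sink, one per wiretap set) is a nonvanishing-polynomial condition in the code coefficients, so a Schwartz--Zippel union bound shows they hold simultaneously once $q$ exceeds the number of conditions times the relevant degree; the bound $q>\binom{|\mathcal{E}|}{k}$ stated earlier is of this form. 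Verifying this max-flow-to-rank translation and the field-size count is the only delicate part; the remaining flow and entropy computations above are routine.
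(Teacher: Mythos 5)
Your proof is correct in substance, but it reaches the conclusion by a different mechanism than the paper. The paper augments the network with a virtual message source $u_s$ and a key-collecting node $u_k$, and gives each virtual sink $d^{\mathcal{A}}$ exactly enough extra incoming capacity ($R_s$ from $u_s$ and $\sum_v R_{w,v}-\sum_{(i,j)\in\mathcal{A}}z_{i,j}$ from $u_k$) that its total in-capacity equals the total entropy $R_s+\sum_v R_{w,v}$ it is asked to decode; it then invokes multi-source multicast achievability once, and secrecy falls out because a sink decoding at exactly its incoming capacity must receive mutually independent information on its incoming links, one of which carries the entire message. You instead argue directly on the transfer matrices: the first LP constraint forces the key-only max-flow to $d^{\mathcal{A}}$ to saturate $\sum_{(i,j)\in\mathcal{A}}z_{i,j}$, a generic code then makes the key submatrix $\mathbf{B}_{\mathcal{A}}$ full row rank, and conditional uniformity of $\mathbf{B}_{\mathcal{A}}\mathbf{w}$ gives $I(\mathbf{v};\mathbf{Z}_{\mathcal{A}})=0$. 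Both arguments pivot on the same combinatorial fact (key flow saturating the wiretap cut), and both ultimately rest on generic linear codes achieving all the required ranks simultaneously; the paper hides that inside the cited multicast theorem, while you surface it via Schwartz--Zippel and must therefore supply the multi-source ``rank equals min-cut for generic coefficients'' lemma yourself --- that lemma, which you correctly flag as the delicate step, is standard but is the one ingredient you would still need to state precisely. Your version buys an explicit secrecy mechanism and a concrete field-size count; the paper's version buys brevity by reducing everything to one known feasibility result. Note also that the bound $q>\binom{|\mathcal{E}|}{k}$ you reuse is specific to the ``any $k$ links'' scenario; for general $\mathcal{W}$ the union bound should be over $|\mathcal{W}|+1$ rank conditions.
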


\begin{proof}As illustrated in the example of Fig.~\ref{f3_2}, consider an augmented network with
\begin{itemize}
\item
a virtual source node $u_s $ connected to the
source node $s $ by a directed link $(u_s, s) $ of capacity $R_s$, and
connected to each virtual sink $ d^{\mathcal{A}}$ by a directed link $(u_s , d^{\mathcal{A}})$ of
capacity $ R_s$, and\item a virtual node $u_k$ connected to each node $v$ by
a directed link $(v,u_k) $ of capacity $R_{w,v}$, and connected to each
virtual sink $ d^{\mathcal{A}}$ by a directed link $(u_k , d^{\mathcal{A}})$
of capacity $\sum_v R_{w,v} - \sum_{(i,j)\in\mathcal{A}}
z_{i,j}$.\end{itemize}

\begin{figure}
\centering
  \includegraphics[scale=0.7]{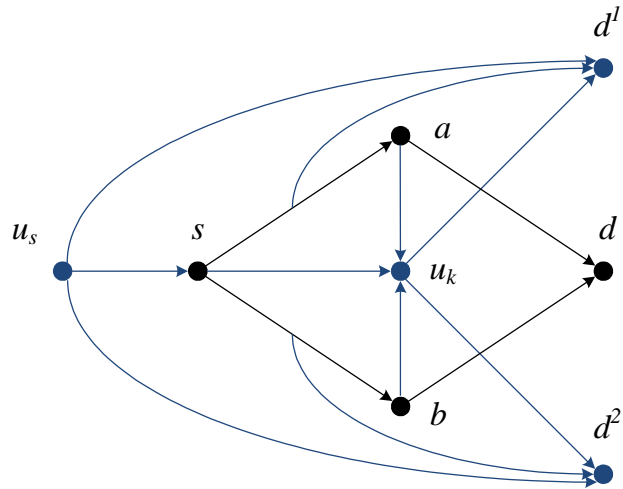}\\
  \caption{An example of the augmented network construction for the proof of
    correctness of strategy 2, where $s,a,b,d$ are nodes of the original graph, and only one of the two links  $(s,a)$ and $(s,b)$ can be wiretapped.}\label{f3_2}
\end{figure}

The source information enters the network at the
virtual source node $u_s$ and is transmitted to each virtual sink $
d^{\mathcal{A}}$. Consider a multi-source multicast problem on this network, where the
actual sink node and the virtual sinks $ d^{\mathcal{A}}$ each demand the source
message and all the random keys. By the first constraint of the LP, the max flow from the random key sources to $d^{\mathcal{A}}$ in the original network equals $\sum_{(i,j)\in
\mathcal{A}} z_{i,j}$; together with the additional capacity in the augmented network ($\sum_v R_{w,v} - \sum_{(i,j)\in\mathcal{A}}
z_{i,j}$ from the random key sources and $R_s$ from $u_s$), the
max flow from the message and random key sources to each virtual sink $d^{\mathcal{A}}$ is sufficient to ensure that the multicast problem is
feasible~\cite{ahlswede00}. A capacity-achieving code for this
multicast problem in the transformed graph corresponds to a code for the original secrecy problem,
since the information received by each virtual sink $ d^{\mathcal{A}}$ from
the set $\mathcal{A}$ of original network links must be independent of information received from the
additional links, which includes the entire source message. \end{proof}

\begin{figure}
\centering
  \includegraphics[scale=0.9]{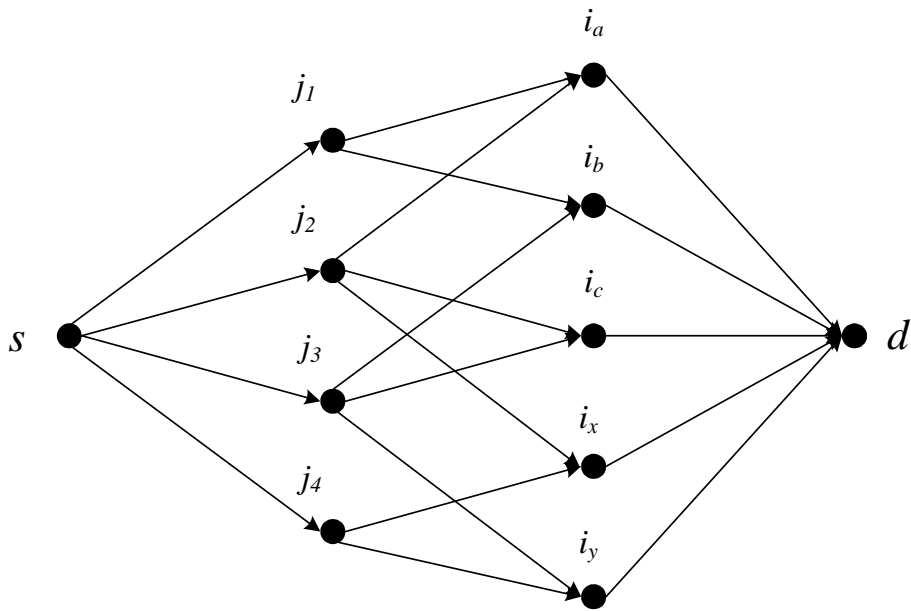}\\
  \caption{Example of the usefulness of Strategy 2.}\label{f31}
\end{figure}

An example where this strategy is useful is given in Fig.~\ref{f31},
which is obtained by interchanging the source and the sink as well
as reversing all the links in Fig.~\ref{f3}. At most three links in
the last layer can be wiretapped. By injecting one local key at node
$j_2$ and two global keys at the source, Strategy 2 can achieve
secrecy rate $2$. On the other hand, if random keys are only
injected at the source, the secrecy rate is at most $\frac{8}{5}$.
Let $R_s$ and $R_w$ be the secrecy rate and the random key rate at the
source, respectively. Let $z$ be the total rate of transmission on the last
layer. To achieve secrecy, we must have $R_w\geq \frac{3}{5}z$, where
the min-cut condition on the last layer requires $R_s+R_w\leq z$. Since
the source injects all the random keys, the min-cut condition on the
first layer requires $R_s+R_w\leq 4$. Combining these we obtain $R_s\leq
\frac{8}{5}$, which is strictly less than 2.

From the examples, we see that the types of scenarios in which Strategy 1 and Strategy 2 are useful seem to be complementary. In general, these two strategies can be combined to
obtain a higher secrecy rate. We use these strategies conceptually in the following sections to develop theoretical results. However, for numerical computation of achievable rates in scenarios 1 and 2, we note that the number of possible wiretapping sets, and thus the size of the LPs, are
exponential in the size $k$ of each wiretap set, so they are useful for small $k$.

\section{Unachievability of Cut Set Bound}\label{section:unachievable}

In the case of
unrestricted wiretapping sets and unit link capacities, the secrecy
capacity is equal to the cut-set bound \cite{cai02}. In this section we
show that the cut-set bound (\ref{eq5_003}) is not achievable in general, by considering
the example in Fig.~\ref{f4}, where the set of wiretappable links is
restricted (Scenario 1). We give an explicit proof that the cut set
bound is not achievable for the case when the wiretap set is unknown. We also use the program  Information
Theoretic Inequalities Prover (Xitip) \cite{xitip} to show that the
secrecy capacity is bounded away from the cut set bound. We then
convert the example into one with unequal link capacities (Scenario
2), and show the unachievability of the cut set bound for this case
also.

\subsection{Restricted Wiretap Set (Scenario 1)}
\begin{figure}
\centering
  \includegraphics[scale=0.75]{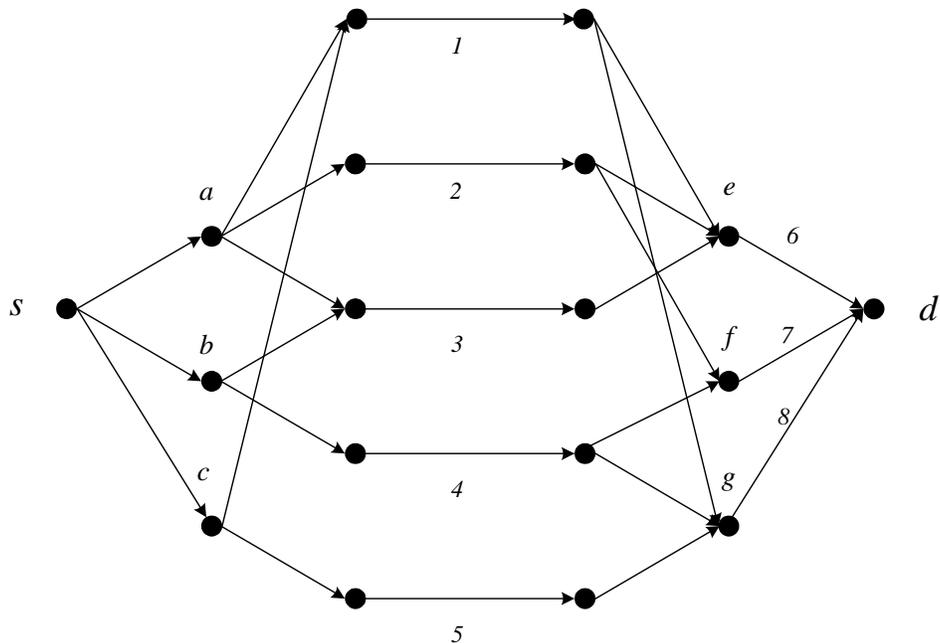} \\
  \caption{An example to show that the secrecy rate without knowledge of wiretapping set is smaller than that with such knowledge.
  The wiretapper can wiretap any three of the five links in the middle layer.}\label{f4}
\end{figure}

Consider the example in Fig.~\ref{f4}, where all links have unit capacity and any three of the five middle layer links can be wiretapped. Let the middle layer links be 1-5 (from top to bottom) and the last
layer links be 6-8 (from top to bottom).  Let the signal carried by link $i$ be called signal $i$,
or $S_i$. Let the source information be denoted $X$. The cut-set bound, or the secrecy capacity with known  wiretap set, is 2.


To provide intuition for the case when the wiretap set is unknown, we
first show that secrecy rate 2 cannot be achieved by using scalar linear
coding. Then, the argument is converted to an information theoretic
proof that secrecy rate 2 cannot be achieved by using any possible coding
scheme.

Suppose secrecy rate 2 is achievable with a scalar linear network code.
First note that the source cannot inject more than unit amount of
random key, otherwise the first layer cannot carry two units of
source data. Let the random key injected by the source be denoted
$K$. For the case when the source injects a unit amount of random key, we first have the following observations. Signal 6 must be a
function of signal 1, otherwise if the adversary sees the signals
2-4 then he knows signals 6-7.  Also, signal 8 must be a function of
signal 5, otherwise if the adversary sees signals 1, 2 and 4, then
he knows signals 7-8. Similarly we can show that signal 8 must be a
function of signal 1, and signal 7 must be a function of signal 2.
We consider the following two cases.

Case 1: signal 5 is a linear combination of signals present at the
source node.  To achieve the full key rank condition on links 1, 2
and 5, node a must put two independent local keys
$k_1$ and $k_2$ on links 1 and 2 respectively.  Link 7, whose other
input is independent of $k_2$, is then a function of $k_2$.
Similarly, Link 8 is a function of $k_1$.  This means that the last
layer has two independent local keys on it.

Case 2: signal 5 is a linear combination of signals present at the
source node as well as a local key $k$ injected by node c.

Case 2a: $k$ is also present in signal 1.  Then $k$ is present in
signal 6, and is independent of the key present in signal 7.

Case 2b: $k$ is not present in signal 1.  Then $k$ is present in
signal 8, and is independent of the key present in signal 7.

In all three cases 1, 2a, and 2b, there is a pair of last layer links which
are functions of two independent random keys, leaving capacity for only one
unit of secret message. Thus, we conclude that the secrecy rate without
knowledge of the wiretapping set by using only linear network coding is less
than two.

We now extend the above argument to any
coding scheme which  leads to the following theorem.

\begin{theorem}
For the wireline network in Fig.~\ref{f4} a secrecy rate of 2 is not achievable with
any possible coding scheme, if any three out of the five links (1-5) in the
middle layer are wiretapped and the location of those links is unknown.
\end{theorem}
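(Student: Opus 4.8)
The plan is to recast the scalar--linear case analysis that precedes the theorem as a system of Shannon--type entropy inequalities on the joint distribution of the source message and the link signals, and then to show that this system is infeasible at rate $R_s=2$. Introduce the message $X$ with $H(X)=nR_s$ and the signals $S_1,\dots,S_8$. The hypotheses become: (i) secrecy, $I(X;S_i,S_j,S_k)=0$ for every triple $\{i,j,k\}\subseteq\{1,\dots,5\}$; (ii) decodability, $H(X\mid S_6,S_7,S_8)\le n\epsilon_n$ by Fano's inequality; (iii) link capacity, $H(S_i)\le n$ for each unit link; and (iv) the topology of Fig.~\ref{f4}, i.e.\ each last--layer signal is a deterministic function of the signals entering the node that produces it, together with that node's local randomness, which only helps the code and is independent of $X$. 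The objective is to derive $H(X)\le n+n\epsilon_n'$, contradicting $R_s=2$.

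First I would reprove, purely information--theoretically, the preliminary observations used in the scalar argument: that $S_6$ must depend essentially on $S_1$, that $S_7$ must depend on $S_2$, and that $S_8$ must depend on both $S_1$ and $S_5$. Each follows by assuming the contrary and exhibiting a wiretap triple whose observation would determine two of the last--layer signals through the Markov structure $X\to(S_1,\dots,S_5)\to(S_6,S_7,S_8)$; since, by (ii) and (iii), $H(S_6,S_7,S_8)\ge H(X)-n\epsilon_n$, those two determined signals must carry the message, forcing $I(X;S_i,S_j,S_k)>0$ for that triple and contradicting (i). In entropy terms these preliminaries read as positivity statements such as $H(S_6\mid S_2,S_3,S_4)>0$.

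The heart of the proof is to establish $H(S_6,S_7,S_8\mid X)\ge 2n-n\epsilon_n''$; combining this with $H(X)\le H(S_6,S_7,S_8)-H(S_6,S_7,S_8\mid X)+n\epsilon_n$ and $H(S_6,S_7,S_8)\le 3n$ yields the desired bound $H(X)\le n+n\epsilon_n'$. I would mirror the three scalar cases. The statement ``signal $5$ carries a local key injected by node $c$'' is rendered as positivity of a conditional entropy, $H(S_5\mid X,S_1,S_2,S_3,S_4)>0$, and the sub--cases (whether that key also appears in $S_1$) become a dichotomy on whether this fresh randomness is correlated with $S_1$, expressed through a conditional mutual information. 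In each branch one extracts two ``key'' components that are conditionally independent given $X$---one always residing in $S_7$, the other in $S_6$ or $S_8$---and lower--bounds each of their entropies by roughly $n$ using the secrecy constraints for the relevant triples, so that together they account for $2n-n\epsilon''$ of conditional entropy. Because tracking these independence relations by hand is error--prone, I would discharge the final purely entropic implication with the Xitip inequality prover, exactly as the paper indicates, certifying that (i)--(iv) force $R_s<2$.

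The main obstacle is the loss of the rank/basis notion once linearity is dropped: phrases like ``two independent local keys'' or ``the other input is independent of $k_2$'' no longer refer to concrete coding vectors but must be encoded as conditional--independence and conditional--entropy relations that are \emph{simultaneously} consistent with all ten secrecy triples and the layered Markov structure. Converting each case cleanly---in particular ensuring that the two extracted key components are jointly independent given $X$, so that their entropies add to the required $2n$ rather than overlapping---is the delicate step, and it is precisely this bookkeeping that I expect to verify mechanically with Xitip rather than through a short closed--form argument.
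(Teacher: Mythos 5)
Your overall strategy is the same as the paper's: the appendix proof likewise recasts the scalar--linear case analysis as Shannon inequalities, derives the preliminary zero--mutual--information facts (e.g.\ $I(S_6;S_2,S_3)=0$, $I(S_8;S_1,S_4)=0$) from the secrecy triples, splits into the same three cases according to whether $S_5$ (and then $S_1$) carries fresh randomness from node $c$, and reaches the same contradiction you target, namely that the last layer must carry more than one unit of key--entropy given $X$ while decodability and the capacity bound $H(S_6,S_7,S_8)\le 3$ force $H(S_6,S_7,S_8|X)=1$. The one substantive divergence is that you discharge the delicate bookkeeping to Xitip, whereas the paper carries it out by hand: in Case 1 it establishes the joint independence you worry about via $H(S_1|S_4,S_5)=H(S_2|S_1,S_4,S_5)=1$, hence $I(S_7;S_8|S_4,S_5)=0$ and $H(S_7,S_8|X)\ge 2$; in Cases 2a/2b it sidesteps the need for two fully independent key components by showing only $H(S_6,S_7|X)>1$ (resp.\ with $S_8$ in place of $S_6$), which already contradicts $H(S_6,S_7,S_8|X)=1$. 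Your machine--verification fallback is nonetheless legitimate and is validated by the paper itself: the separate Xitip computation with the constraint set in (\ref{eq1000}) certifies the stronger bound $H(X)\le 5/3$, which subsumes the theorem. The only caution is that the correctness of that route hinges on choosing the constraint list carefully (in particular the conditional--independence relations (3)--(5) and (11)--(16) that encode where local randomness may enter), a point your proposal leaves implicit; and note that the paper's hand proof also uses the intermediate deduction $H(S_7|X,S_6)=0$, ruling out keys injected after the middle layer, before the case analysis begins.
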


\begin{proof}
See Appendix.
\end{proof}

We can also show that the secrecy rate is bounded away from 2 by
using the framework for linear information inequalities
\cite{yeung97}. Let $X$ be the message sent from the source and
$Z_i$, $i=1,2,3$ be the signals on the links adjacent to the
source. We want to check whether $H(X) \leq \omega$ is implied by
\begin{equation}\label{eq1000}
\begin{split}
(1)\quad& H(Z_i)\leq1,\,H(S_j)\leq1,\,i=1,2,3,\,j=1,\ldots,8,\\
(2)\quad& H(X|S_6,S_7,S_8)=0,\\
(3)\quad& I(X,Z_1,Z_2,Z_3,S_4,S_5,S_7,S_8;S_6|S_1,S_2,S_3)=0,\\
(4)\quad& I(X,Z_1,Z_2,Z_3,S_1,S_3,S_5,S_6,S_8;S_7|S_2,S_4)=0,\\
(5)\quad& I(X,Z_1,Z_2,Z_3,S_2,S_3,S_6,S_7;S_8|S_1,S_4,S_5)=0,\\
(6)\quad& I(X;S_1,S_2,S_3)=0,\, I(X;S_1,S_2,S_4)=0,\\
(7)\quad& I(X;S_1,S_2,S_5)=0,\, I(X;S_1,S_3,S_4)=0,\\
(8)\quad& I(X;S_1,S_3,S_5)=0,\, I(X;S_1,S_4,S_5)=0,\\
(9)\quad& I(X;S_2,S_3,S_4)=0,\, I(X;S_2,S_3,S_5)=0,\\
(10)\quad& I(X;S_2,S_4,S_5)=0,\, I(X;S_3,S_4,S_5)=0,\\
(11)\quad& I(S_1;Z_2|Z_1,Z_3)=0,\, I(S_2;Z_2,Z_3|Z_1)=0,\\
(12)\quad& I(S_3;Z_3|Z_1,Z_2)=0,\, I(S_4;Z_1,Z_3|Z_2)=0,\\
(13)\quad& I(S_5;Z_1,Z_2|Z_3)=0,\, I(S_1;S_4|Z_1,Z_2,Z_3)=0,\\
(14)\quad& I(S_2;S_4,S_5|Z_1,Z_2,Z_3)=0,\, I(S_3;S_5|Z_1,Z_2,Z_3)=0,\\
(15)\quad& I(S_4;S_1,S_2,S_5|Z_1,Z_2,Z_3)=0,\, I(S_5;S_2,S_3,S_4|Z_1,Z_2,Z_3)=0,\\
(16)\quad& I(S_1,S_2,S_3,S_4,S_5;X|Z_1,Z_2,Z_3)=0,
\end{split}
\end{equation}
where the first inequality is the capacity constraint, the second
constraint shows that the sink can decode $X$, constraints (3) to (5)
mean that the signals in the last layer are independent of other
signals given the incoming signals from the middle layer, constraints
(6) to (10) represent the secrecy constraints when any three links in
the middle layer are wiretapped, and constraints (11) to (16)
represent the conditional independence between the signals in the
first layer and those in the middle layer. In particular, (16) shows
that $X\rightarrow (Z_1,Z_2,Z_3)\rightarrow (S_1,\dots,S_5)$ forms a
Markov chain.  Note that constraints (3) to (5) and (11) to (16)
implicitly allow some randomness to be injected at the corresponding
nodes. We use the Xitip program \cite{xitip}, which relies on the
framework in \cite{yeung97}, to show that $H(X)\leq5/3$ is implied by
the set of equalities (\ref{eq1000}).  Therefore, $5/3$ is an upper
bound on the secrecy rate when the location of wiretapper is unknown,
which is less than the secrecy rate 2 achievable when such information
is known. Therefore, there is a strict gap between the secrecy
capacity and the cut set bound.

\begin{figure}
\centering
  \includegraphics[scale=0.75]{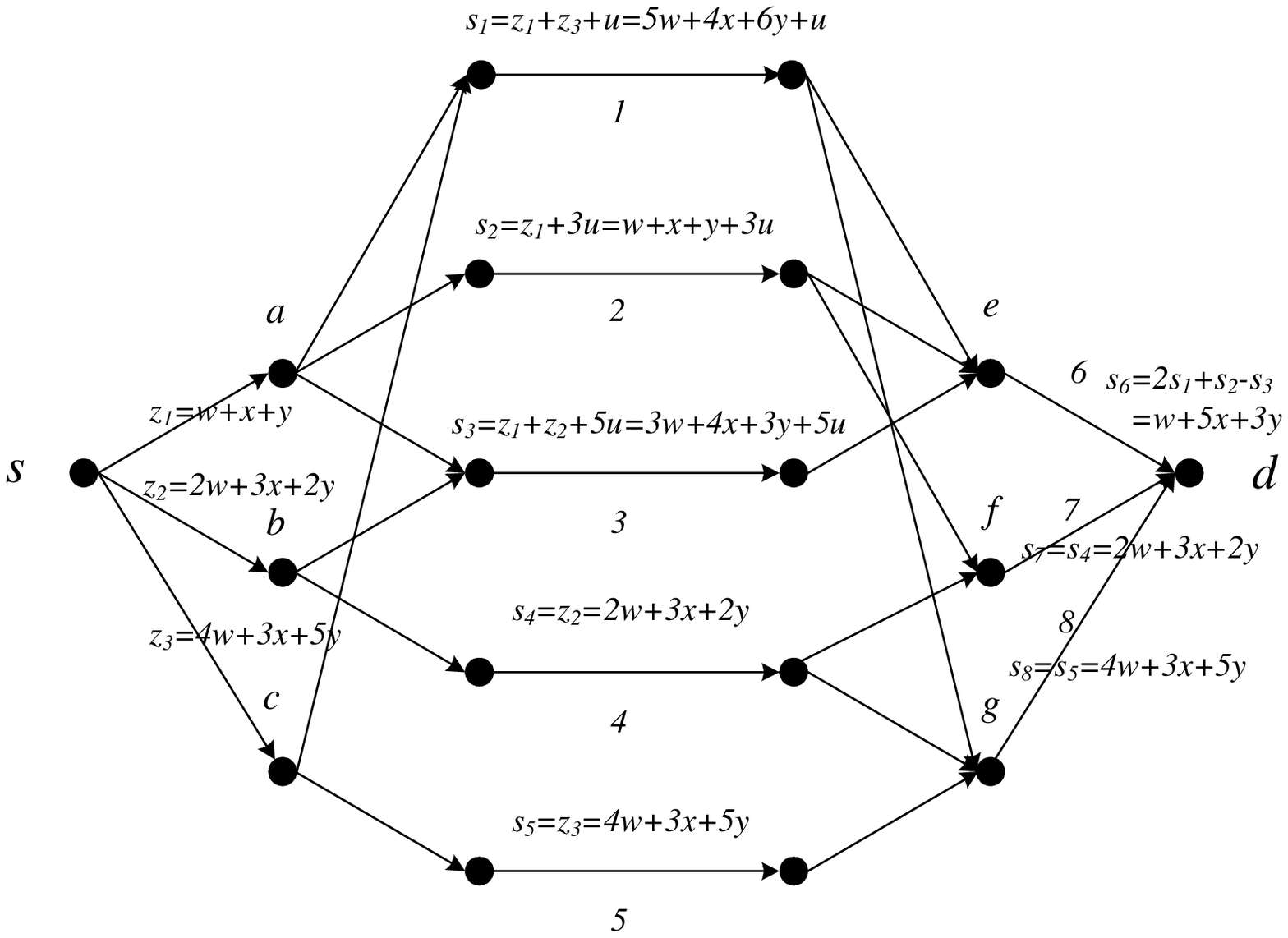} \\
  \caption{A coding scheme achieving secrecy rate 1 without knowledge of the
    wiretap set for the network in Fig.~\ref{f4}, where any three of the five
    middle layer links can be wiretapped. $w$ is the secret message, $x$ and
    $y$ are keys injected at the source, and $u$ is a key injected at node $a$
    and canceled at node $e$. The operations are over a finite field $GF(7)$.
  }\label{f4_1}
\end{figure}

On the other hand, the secrecy rate for the wireline network in Fig.~\ref{f4} is at least 1 which is shown by the example in Fig.~\ref{f4_1}, where a finite field $GF(7)$ is used.
In this example, a combination of strategies 1 and 2 is used, where keys are
injected inside the network and are also canceled at intermediate nodes.

\subsection{Unequal Link Capacities (Scenario 2)}

We have restricted the wiretapped links to be in the middle layer in
Fig.~\ref{f4}. We next show that the unachievability of the cut-set
bound also holds for the secure network coding problem with unequal
link capacities (Scenario 2). We convert the example of
Fig.~\ref{f4} by partitioning each non-middle layer link into
$\frac{1}{\epsilon}$ parallel small links each of which has capacity
$\epsilon$. Any three links can be wiretapped in the transformed
graph. We
prove the unachievability of the cut-set bound in the transformed
network.

First, we show a lower bound on the min-cut between the source
and the sink in the transformed network when three links are deleted. Note that deleting any $k^\prime$ ($k^\prime\leq 3$) non-middle layer links reduces the min-cut by at most
$k^\prime\epsilon$. When $k^\prime = 0$, the min-cut is 2. When $k^\prime= 1$ or at most two
middle layer links are deleted, the min-cut  is at least 2 after deleting these middle layer links, and the min-cut is at least $2-k^\prime\epsilon\geq 2 -\epsilon$ after further deleting the $k^\prime= 1$ non-middle layer link. When $k^\prime= 2$ or at most one
middle layer link is deleted, the min-cut between the source
and the sink is 3 after deleting this middle layer link, and the min-cut is at least $3-k^\prime\epsilon\geq 3 -3\epsilon$ after further deleting the $k^\prime$ non-middle layer links.
Therefore, the cut-set bound is at least $\min (2 -\epsilon, 3 -3\epsilon)$.

For the case where the location of the wiretap links is unknown, we
prove the unachievability of the cut-set bound in the transformed
network. First, consider the transformed network with the
restriction that the wiretapper can only wiretap any 3 links in the
middle layer. The optimal solution is exactly the same as for the
original network of the previous subsection, and achieves secrecy
rate at most $5/3$. Now, consider the transformed network without
the restriction on wiretapping set, i.e., the wiretapper can wiretap
any 3 links in the entire network. As wiretapping only the middle
layer links is a subset of all possible strategies that the
wiretapper can have, the secrecy rate in the transformed network is
less than or equal to that in the former case, which is strictly
smaller than the cut-set bound for $\epsilon$ strictly smaller than $\frac{1}{4}$. Therefore, the cut-set bound is
still unachievable when the wiretap links are unrestricted in the
transformed graph.

\section{NP-hardness}\label{section:NP}

We show in the following that determining the secrecy capacity 
is NP-hard 
by reduction from the clique problem, which determines whether a
graph contains a clique\footnote{A
  clique in a graph is a set of size $r$ of pairwise adjacent vertices, or in
  other words, an induced subgraph which is a complete graph.} of at
least a given size $r$.

When the choice of the wiretap set is made known to the communicating nodes, the secrecy capacity  is given by the cut-set bound, from Theorem~\ref{th31}, and is achieved by not transmitting on the wiretapped links. Finding the cut-set bound involves determining the worst case wiretap set. This is equivalent
to the network interdiction problem
\cite{wood93}, which is to minimize the
maximum flow of the network when a given number of links in the
network are removed.
It is shown in \cite{wood93} that the network interdiction problem is
NP-hard. Therefore, determining the secrecy capacity for the case where the location of the wiretap links
is known is NP-hard.

To show that determining the secrecy capacity for the case where the location of
the wiretap links is unknown is NP-hard, we use the construction in
\cite{wood93} showing that for any clique problem on a given graph
$\mathcal{H}$, there exists a corresponding network
$\mathcal{G}^{\mathcal{H}}$ whose secrecy capacity is $r$ when the
location of the wiretap links is known if and only if $\mathcal{H}$
contains a clique of size $r$. We then show that for all such networks $\mathcal{G}^{\mathcal{H}}$,
the secrecy capacity for the case when the location of the wiretap links
is unknown is equal to that for the case when this information is
known, which shows that there is a one-to-one correspondence between the
clique problem and the secrecy capacity problem.

\begin{figure}
\centering
  \subfigure[Original Graph $\mathcal{H}$]{\includegraphics[scale=0.7]{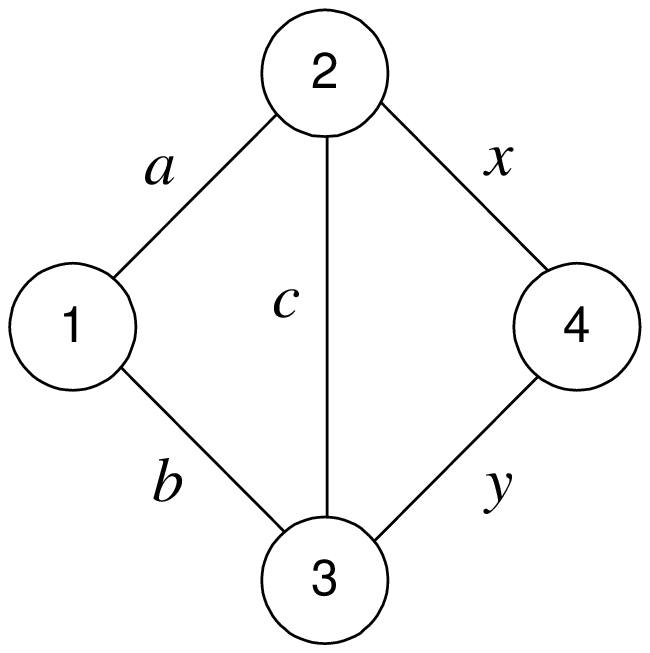}}
  \subfigure[Transformed Graph $\mathcal{G}^{\mathcal{H}}$]{\includegraphics[scale=0.7]{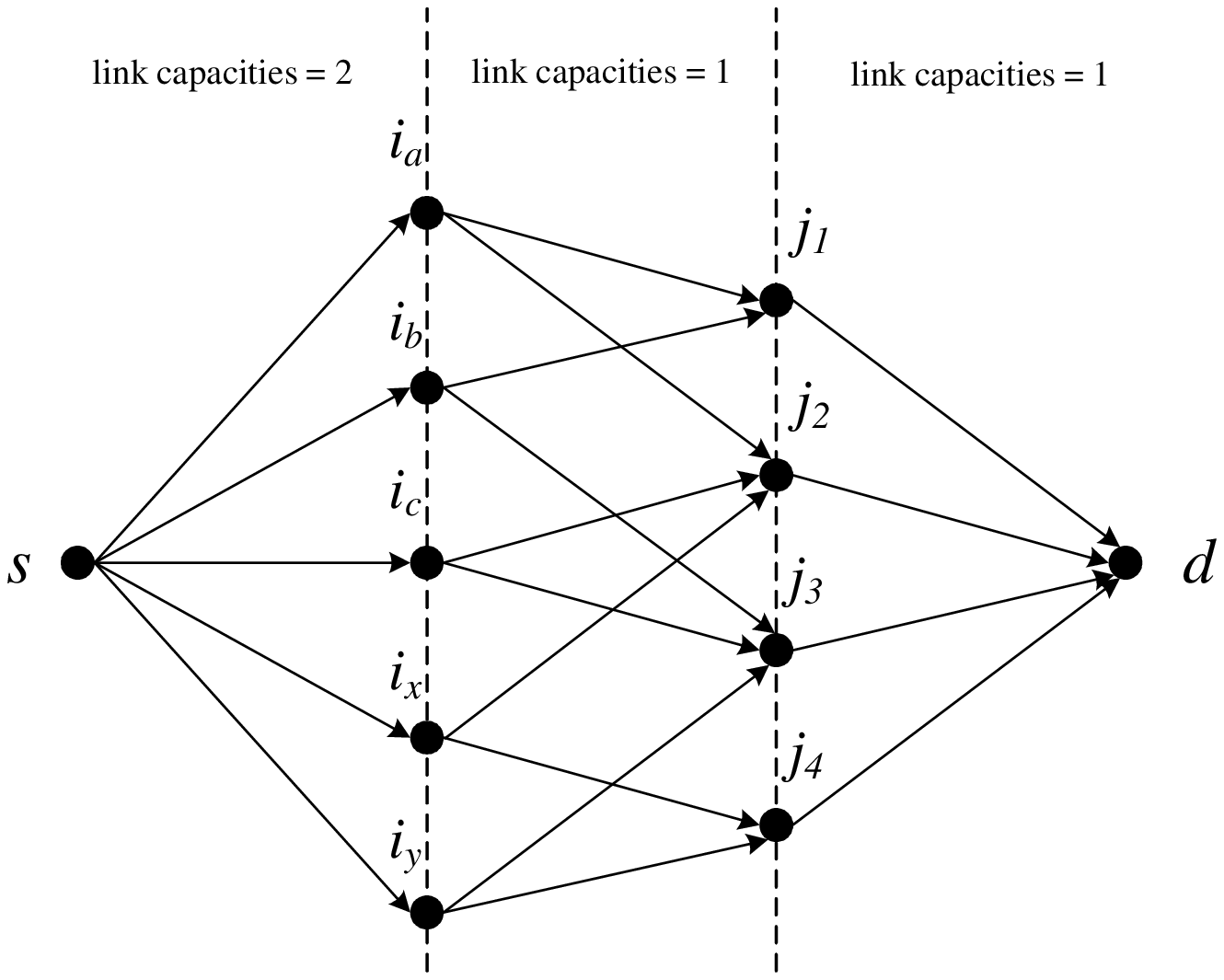}}\\
  \caption{Example of NP-hardness proof for the case with
    knowledge of the wiretapping set.}
  \label{f2}
\end{figure}

We briefly describe the approach in \cite{wood93} in the following.
Given an undirected graph
$\mathcal{H}=(\mathcal{V}_h,\mathcal{E}_h)$, we will define a
capacitated directed network $\hat{\mathcal{G}}^{\mathcal{H}}$ such
that there exists a set of links $\hat{\mathcal{A}}^\prime$ in
$\hat{\mathcal{G}}^{\mathcal{H}}$ containing less than or equal to
$|\mathcal{E}_h|-{r\choose 2}$ links such that
$\hat{\mathcal{G}}^{\mathcal{H}}-\hat{\mathcal{A}}^\prime$ has a
maximum flow of $r$ if and only if $\mathcal{H}$ contains a clique
of size $r$. For a given undirected graph
$\mathcal{H}=(\mathcal{V}_h,\mathcal{E}_h)$ without parallel links
and self loops, we create a capacitated, directed graph
$\mathcal{G}^{\mathcal{H}}=(\mathcal{N},\mathcal{A})$ as follows:
For each link $e\in \mathcal{E}_h$ create a node $i_e$ in a node set
$\mathcal{N}_1$ and for each vertex $v\in \mathcal{V}_h$ create a
node $j_v$ in a node set $\mathcal{N}_2$. In addition, create source
node $s$ and destination node $d$. For each link $e\in
\mathcal{E}_h$, direct a link in $\mathcal{G}^{\mathcal{H}}$ from
$s$ to $i_e$ with capacity 2 and call this set of links
$\mathcal{A}_1$. For each link $e=(u,v)\in \mathcal{E}_h$, direct
two links in $\mathcal{G}^{\mathcal{H}}$ from $i_e$ to $j_v$ and
$j_u$ with capacity 1, respectively and call this set of links
$\mathcal{A}_2$. For each vertex $v\in \mathcal{V}_h$, direct a link
with capacity 1 from $j_v$ to $d$. Let this be the set of links
$\mathcal{A}_3$. This completes the construction of
$\mathcal{G}^{\mathcal{H}}=(\mathcal{N},\mathcal{A})=(\{s\}\cup\{d\}\cup
\mathcal{N}_1\cup\mathcal{N}_2,\mathcal{A}_1\cup\mathcal{A}_2\cup\mathcal{A}_3)$.
In Fig.~\ref{f2}, we give an example of the graph transformation,
where $\mathcal{H}=(\{1,2,3,4\},\{a,b,c,x,y\})$. We use the following result from \cite{wood93}:

\begin{lemma}[{\cite[Lemma 2]{wood93}}] \label{lm31}
Let $\mathcal{G}^{\mathcal{H}}$ be constructed from $\mathcal{H}$ as
above. Then, there exists a set of links
$\mathcal{A}_1^\prime\subseteq \mathcal{A}_1$ with
$|\mathcal{A}_1^\prime|=|\mathcal{E}_h|-{r\choose 2}$ such that the
maximum flow from $s$ to $d$ in
$\mathcal{G}^{\mathcal{H}}-\mathcal{A}_1^\prime$ is $r$ if and only
if $\mathcal{H}$ contains a clique of size $r$.
\endproof
\end{lemma}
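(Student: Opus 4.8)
The plan is to reduce the max-flow computation in the interdicted network to a purely combinatorial covering statement about $\mathcal{H}$, since all the interesting structure of $\mathcal{G}^{\mathcal{H}}$ lies in how the retained source links expose vertices of $\mathcal{H}$. Deleting a set $\mathcal{A}_1^\prime\subseteq\mathcal{A}_1$ of size $|\mathcal{E}_h|-{r\choose 2}$ leaves exactly ${r\choose 2}$ of the source links $s\to i_e$ intact; I would let $E^\prime\subseteq\mathcal{E}_h$ denote the corresponding set of ``active'' edges, so $|E^\prime|={r\choose 2}$. Because a node $i_e$ can only be fed through its link from $s$, the only edge-nodes carrying flow are those with $e\in E^\prime$, and these reach only the vertex-nodes $j_v$ with $v$ an endpoint of some active edge. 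The central claim I would establish first is that the maximum $s$--$d$ flow in $\mathcal{G}^{\mathcal{H}}-\mathcal{A}_1^\prime$ equals $|V(E^\prime)|$, the number of vertices of $\mathcal{H}$ covered by $E^\prime$, where $V(E^\prime)=\{v:v\text{ is an endpoint of some }e\in E^\prime\}$.

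To prove this claim I would argue matching upper and lower bounds. For the upper bound, the links $\{(j_v,d):v\in V(E^\prime)\}$ form an $s$--$d$ cut, since every $s$--$d$ path must leave the network through one of them, and its capacity is exactly $|V(E^\prime)|$ as each $(j_v,d)\in\mathcal{A}_3$ has unit capacity. For the lower bound I would exhibit an explicit feasible flow of value $|V(E^\prime)|$: for each covered vertex $v$ choose one incident active edge $e_v\in E^\prime$ and route a single unit along $s\to i_{e_v}\to j_v\to d$. Each link $i_e\to j_v$ then carries at most one unit and each source link $s\to i_e$ carries at most two units, one for each of the two endpoints of $e$, so all capacities are respected. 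This yields $\mathrm{maxflow}=|V(E^\prime)|$.

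With the max-flow claim in hand, the remainder is the combinatorial equivalence. Because $\mathcal{H}$ has no parallel links, writing $t=|V(E^\prime)|$ we have ${r\choose 2}=|E^\prime|\le{t\choose 2}$, which forces $t\ge r$. Moreover $t=r$ holds exactly when $|E^\prime|={t\choose 2}$, i.e.\ when $E^\prime$ consists of \emph{all} edges among its $r$ endpoints, which is precisely the condition that $V(E^\prime)$ induces a clique of size $r$. Combining with $\mathrm{maxflow}=t$ gives both directions: if $\mathcal{H}$ contains an $r$-clique I take $E^\prime$ to be its ${r\choose 2}$ edges, obtaining $t=r$ and hence max-flow $r$; conversely, any retained $E^\prime$ with $|E^\prime|={r\choose 2}$ achieving max-flow $r$ has $t=r$, so those $r$ vertices form a clique.

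The step I expect to be the main obstacle is verifying feasibility of the lower-bound flow, specifically that a single edge-node $i_e$ may feed \emph{both} of its endpoints at once while respecting the capacity-$2$ source link and the two capacity-$1$ outgoing links; this is exactly where the choice of capacity $2$ on the links in $\mathcal{A}_1$ (versus capacity $1$ on $\mathcal{A}_2$ and $\mathcal{A}_3$) is used. Once this is checked, the upper-bound cut and the combinatorial counting are routine, and the argument coincides with the construction of \cite{wood93}.
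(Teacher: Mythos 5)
Your proof is correct. Note, however, that the paper itself does not prove this lemma at all: it is stated as a citation of Lemma~2 of \cite{wood93} (the network-interdiction paper), with \verb|\endproof| immediately following the statement, so there is nothing in the paper to compare against line by line. What you have done is reconstruct the argument from scratch, and your reconstruction is sound: the cut $\{(j_v,d):v\in V(E')\}$ gives the upper bound $|V(E')|$, the explicit routing (one unit per covered vertex, at most two units through each capacity-$2$ source link) gives the matching lower bound, and the counting step $\binom{r}{2}=|E'|\le\binom{|V(E')|}{2}$ with equality iff $E'$ induces a clique settles both directions. It is worth observing that your central claim --- max flow equals the number of vertices of positive degree in the retained subgraph --- is exactly Lemma~1 of \cite{wood93}, which the paper invokes separately in Section~V, so your proof in effect derives Lemma~2 from Lemma~1 in the same way Wood does. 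The only caveat is the degenerate regime $r\le 1$, where $\binom{r}{2}=0$ forces $E'=\emptyset$ and the max flow is $0\ne r$ for $r=1$; this is immaterial for the NP-hardness reduction (clique is trivial for small $r$) but means the lemma as literally stated carries an implicit assumption $r\ge 2$, which you could flag explicitly.
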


After obtaining $\mathcal{G}^{\mathcal{H}}$, we generate
$\hat{\mathcal{G}}^{\mathcal{H}}$ by replacing each link $(i_e,j_v)$
with $|\mathcal{E}_h|$ parallel links each with capacity
$1/|\mathcal{E}_h|$ and call this link set $\hat{\mathcal{A}}_2$. We
carry out the same procedure for links $(j_v,d)$ and call this link
set $\hat{\mathcal{A}}_3$. Then
$\hat{\mathcal{G}}^{\mathcal{H}}=(\mathcal{N},\mathcal{A})=(\{s\}\cup\{d\}\cup
\mathcal{N}_1\cup\mathcal{N}_2,\mathcal{A}_1\cup\hat{\mathcal{A}}_2\cup\hat{\mathcal{A}}_3)$.
For the case when the location of wiretap links is known, it is
shown in \cite{wood93} that the worst case wiretapping set
$\hat{\mathcal{A}}^\prime$ must be a subset of $\mathcal{A}_1$. By
using Lemma \ref{lm31}, this case is NP-hard.

Now, we consider the secrecy capacity when $k=|\mathcal{E}_h|-{r\choose 2}$ and the wiretapping set is unknown. From Lemma \ref{lm31},
the condition that $\mathcal{H}$ contains a clique of size $r$ is
equivalent to the condition that the max-flow to the sink in
$\mathcal{G}^{\mathcal{H}}$ after removing any $k$ links from
$\mathcal{A}_1$ is at least $r$. We now show that the latter condition is
equivalent to the condition that the secrecy capacity of
$\mathcal{G}^{\mathcal{H}}$ when the wiretapper accesses any unknown
subset of $k$ links from $\mathcal{A}_1$ (Scenario 1) is at least $r$. For each subset $\mathcal{A}^\prime$ of $k$ links from $\mathcal{A}_1$, we create nodes
$t^{\mathcal{A}^\prime_1}$ and
$d^{\mathcal{A}^\prime_1}$ with their corresponding incident links as described in Strategy 1.  As the wiretapped links each have capacity 2 and are connected to the
source directly, the min-cut between the source and each virtual sink $d^{\mathcal{A}^\prime_1}$ is at least $2k+r$.
Then, by using Strategy 1 the secrecy rate $r$ is
achievable. 

Finally, we show that the same condition is also equivalent to the condition that the
secrecy capacity of
$\hat{\mathcal{G}}^{\mathcal{H}}$ when any $k$ links are wiretapped
(Scenario 2) is at least $r$. Since each second layer link has a single first layer
link as its only input, wiretapping a second layer link yields no
more information to the wiretapper than wiretapping a first layer
link. When some links in the third layer are wiretapped, let the
wiretapping set be
$\hat{\mathcal{A}}^\prime=\hat{\mathcal{A}}^\prime_1\cup\hat{\mathcal{A}}^\prime_3$
where 
$|\hat{\mathcal{A}}^\prime_3|\geq 1$ and
$|\hat{\mathcal{A}}^\prime_1|\leq k-1$. Thus
$\mathcal{A}_1-\hat{\mathcal{A}}^\prime_1$ contains at least
${r\choose 2}+1$ links. We create nodes
$t^{\hat{\mathcal{A}}^\prime}$ and
$d^{\hat{\mathcal{A}}^\prime}$ with their corresponding incident links as described in Strategy 1. Since removing links in $\mathcal{A}_1$
corresponds to removing links in $\mathcal{H}$, after removing
links in $\mathcal{H}$ corresponding to
$\hat{\mathcal{A}}^\prime_1$, $\mathcal{H}$ contains a subgraph
$\mathcal{H}_1$ containing ${r\choose 2}$ links plus at least one link $e=(u,v)$. 

Case 1: $\mathcal{H}_1$ is a clique of size $r$. In this case, the
number of vertices with degree greater than 0 in $\mathcal{H}_1\cup
e$ is $r+2$.

Case 2: $\mathcal{H}_1$ is not a clique. $\mathcal{H}_1$ contains at
least $r+1$ vertices with degree greater than 0.

According to \cite[Lemma 1]{wood93}, the max-flow in
$\mathcal{G}^{\mathcal{H}}$ is equal to the number of vertices in
$\mathcal{H}$ with degree greater than 0. In both cases, the
max-flow of  $\mathcal{G}^{\mathcal{H}}$ after removing links in
$\hat{\mathcal{A}}^\prime_1$ is at least $r+1$. Let $\tilde{R}_{s\rightarrow
\hat{\mathcal{A}}_3^\prime}$ be the max-flow capacity from the
source to $\hat{\mathcal{A}}_3^\prime$ in
$\hat{\mathcal{G}}^{\mathcal{H}}-\hat{\mathcal{A}}^\prime_1$.

We can use a variant of the Ford-Fulkerson (augmenting paths) algorithm, e.g., \cite{dean06finite},
as follows to construct a max-flow subgraph
$\mathcal{D}$ from $s$ to $\hat{\mathcal{A}}_3^\prime$ in
$\hat{\mathcal{G}}^{\mathcal{H}}-\hat{\mathcal{A}}^\prime_1$ satisfying the property that after removing
$\mathcal{D}$ from
$\hat{\mathcal{G}}^{\mathcal{H}}-\hat{\mathcal{A}}^\prime_1$, the min-cut
between $s$ and $d$ is at least
\begin{eqnarray}\nonumber r+1-\tilde{R}_{s\rightarrow
\hat{\mathcal{A}}_3^\prime}&\ge &r+1-|\hat{\mathcal{A}}^\prime_3|/|\mathcal{E}_h|\\\nonumber &\ge & r+1-(|\mathcal{E}_h|-1)/|\mathcal{E}_h|\\&>&r,\label{equation:paths}\end{eqnarray}where we have used $|\hat{\mathcal{A}}^\prime_3| \le |\mathcal{E}_h|-1$. Considering the network $\hat{\mathcal{G}}^{\mathcal{H}}-\hat{\mathcal{A}}^\prime_1$ with all link directions reversed, we construct augmenting paths via depth first search from $d$ to $s$, starting first by constructing augmenting paths
via  links in $\hat{\mathcal{A}}_3^\prime$, until we obtain a set of paths corresponding to a max flow of capacity $\tilde{R}_{s\rightarrow
\hat{\mathcal{A}}_3^\prime}$ between $s$ and $\hat{\mathcal{A}}_3^\prime$.
We add further augmenting paths until we obtain a max flow (of capacity at least $r+1$) between $s$ and $d$, which may cause some of the paths traversing links in $\hat{\mathcal{A}}_3^\prime$ to be redefined but without changing their  total capacity. The subgraph $\mathcal{D}$ consists of the final set of paths traversing links in $\hat{\mathcal{A}}_3^\prime$. Thus, the paths remaining after removing $\mathcal{D}$ have a total capacity lower bounded by (\ref{equation:paths}).

Therefore, the min-cut between the source and
$d^{\hat{\mathcal{A}}^\prime}$ in
$\hat{\mathcal{G}}^{\mathcal{H}}-\hat{\mathcal{A}}^\prime_1-\mathcal{D}$
is at least $r$, and the min-cut between the source and
$d^{\hat{\mathcal{A}}^\prime}$ in $\hat{\mathcal{G}}^{\mathcal{H}}$ is at least
$r+R_{s\rightarrow \hat{\mathcal{A}}^\prime_1}+\tilde{R}_{s\rightarrow
\hat{\mathcal{A}}_3^\prime}=r+R_{s\rightarrow \hat{\mathcal{A}}^\prime}$.
By using Strategy 1, a secure rate of $r$ is achievable when $\hat{\mathcal{A}}^\prime$ is wiretapped. Thus, the secrecy rate
for the case when the location of the wiretap links is unknown is
equal to that for the case when such information is known with
an unrestricted wiretapping set. We have thus proved the following theorem.

\begin{theorem}\label{th36}
  For a single-source single-sink network consisting of
  point-to-point links and an unknown wiretapping set, computing the
  secrecy capacity is NP-hard.
\end{theorem}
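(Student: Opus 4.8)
The plan is to reduce from the clique problem, leveraging the construction of $\hat{\mathcal{G}}^{\mathcal{H}}$ from \cite{wood93} together with the achievability of Strategy~1 (Section~III) and the cut-set bound of Theorem~\ref{th31}. The core idea is to establish a chain of equivalences: $\mathcal{H}$ has a clique of size $r$ $\iff$ the secrecy capacity of $\mathcal{G}^{\mathcal{H}}$ (Scenario~1, wiretapping any $k=|\mathcal{E}_h|-\binom{r}{2}$ links of $\mathcal{A}_1$) is at least $r$ $\iff$ the secrecy capacity of $\hat{\mathcal{G}}^{\mathcal{H}}$ (Scenario~2, wiretapping any $k$ links of the whole network) is at least $r$. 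Since the clique problem is NP-hard and the transformation is polynomial, this will prove Theorem~\ref{th36}.

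First I would recall Lemma~\ref{lm31}, which already ties the clique condition to the existence of a $k$-subset $\mathcal{A}_1'\subseteq\mathcal{A}_1$ whose removal leaves max-flow exactly $r$; combined with \cite[Lemma~1]{wood93} (max-flow equals the number of positive-degree vertices in the residual graph), this gives the clean graph-theoretic reformulation that removing \emph{any} $k$ links from $\mathcal{A}_1$ leaves max-flow at least $r$ precisely when $\mathcal{H}$ has an $r$-clique. The Scenario~1 direction is then the easy half: since each wiretappable link in $\mathcal{A}_1$ has capacity $2$ and attaches directly to the source, for every $k$-subset $\mathcal{A}'$ the min-cut from $s$ to the virtual sink $d^{\mathcal{A}'_1}$ is at least $2k+r$, so the Strategy~1 LP \eqref{eq321_1} is feasible at rate $R_s=r$, establishing achievability; the matching converse follows from the cut-set bound.

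The main obstacle — and the technical heart of the argument — is the Scenario~2 direction, specifically showing that allowing the wiretapper to choose \emph{any} $k$ links of $\hat{\mathcal{G}}^{\mathcal{H}}$ (not just first-layer links) does not reduce the secrecy capacity below $r$. I would first argue that wiretapping a second-layer link is no more powerful than wiretapping its unique first-layer input, reducing attention to wiretap sets $\hat{\mathcal{A}}'=\hat{\mathcal{A}}'_1\cup\hat{\mathcal{A}}'_3$ with $|\hat{\mathcal{A}}'_3|\ge 1$, hence $|\hat{\mathcal{A}}'_1|\le k-1$. Then $\mathcal{A}_1\setminus\hat{\mathcal{A}}'_1$ retains at least $\binom{r}{2}+1$ links, so the residual subgraph $\mathcal{H}_1$ has an extra edge beyond a clique, forcing (by case analysis on whether $\mathcal{H}_1$ is itself a clique) at least $r+1$ positive-degree vertices and thus max-flow at least $r+1$ after deleting $\hat{\mathcal{A}}'_1$. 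The delicate step is then routing: I must exhibit a max-flow subgraph $\mathcal{D}$ from $s$ to $\hat{\mathcal{A}}'_3$ such that deleting $\mathcal{D}$ leaves min-cut at least $r+1-\tilde R_{s\to\hat{\mathcal{A}}'_3}>r$, using $|\hat{\mathcal{A}}'_3|\le|\mathcal{E}_h|-1$ and each third-layer sublink having capacity $1/|\mathcal{E}_h|$ as in \eqref{equation:paths}.

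To construct $\mathcal{D}$ I would invoke a Ford–Fulkerson augmenting-path procedure (e.g.~\cite{dean06finite}) on $\hat{\mathcal{G}}^{\mathcal{H}}-\hat{\mathcal{A}}'_1$ with link directions reversed, first saturating paths through $\hat{\mathcal{A}}'_3$ to realize a flow of value $\tilde R_{s\to\hat{\mathcal{A}}'_3}$ between $s$ and $\hat{\mathcal{A}}'_3$, then extending to a full $s$--$d$ max-flow of value at least $r+1$; the augmenting paths through $\hat{\mathcal{A}}'_3$ may be redefined but their total capacity is preserved. Removing the final set $\mathcal{D}$ of $\hat{\mathcal{A}}'_3$-traversing paths then leaves the claimed residual min-cut, so the min-cut from $s$ to $d^{\hat{\mathcal{A}}'}$ in $\hat{\mathcal{G}}^{\mathcal{H}}$ is at least $r+R_{s\to\hat{\mathcal{A}}'}$, and Strategy~1 again achieves rate $r$. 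I expect the flow-decomposition bookkeeping — ensuring the redefinition of augmenting paths does not inflate the capacity consumed on $\hat{\mathcal{A}}'_3$ — to be the subtlest point, while the equivalence-to-clique and converse (cut-set) arguments are comparatively routine.
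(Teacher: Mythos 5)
Your proposal follows essentially the same route as the paper's own proof: the reduction from the clique problem via Wood's construction, the $2k+r$ min-cut argument for Scenario~1 wiretap sets in $\mathcal{A}_1$, the reduction of second-layer wiretaps to first-layer ones, the case analysis on whether $\mathcal{H}_1$ is a clique to get max-flow $r+1$, and the reverse-direction Ford--Fulkerson construction of $\mathcal{D}$ yielding the bound $r+1-\tilde R_{s\to\hat{\mathcal{A}}'_3}>r$. The argument is correct and matches the paper's proof in both structure and technical detail.
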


\section{Conclusion}\label{sect5_6}
In this paper, we addressed the secrecy capacity of wireline
networks where different links have different capacities. In
particular, it was shown that the secrecy capacity is not the same
in general when the location of the wiretapped links is known or
unknown; in the former case the capacity is given by a cut-set bound, which is unachievable in general in the latter case.  Further, we proposed achievable strategies where random
keys are canceled at intermediate non-sink nodes, or injected at
intermediate non-source nodes. Finally, we showed that determining
the secrecy capacity is an NP-hard problem.

\section*{Appendix: Proof of Theorem 2}
\label{sec:proofth2}
We prove Theorem 2 by contradiction. Suppose that a secrecy rate of 2 is
achievable for the network in Fig.~\ref{f4}. As before, let $X$ and $K$ denote
respectively the secret message and random key injected by the source node,
and $S_i$ the signal on link $i$. Then each triple of links in the middle
layer has zero mutual information with the source data, and each pair of links
in the middle layer has joint conditional entropy 2 given the other three
links.

Since the message $X$ is decodable from information on the last layer, we have
$I(S_6,S_7,S_8;X)=2$.  
Since $I(S_1,S_2,S_3;X)=0$, by the data processing inequality $I(S_6;X)=0$,
therefore, $I(S_7,S_8;X|S_6) = 2$ and $H(S_7|S_6)= I(S_7;X|S_6) = 1$. Then,
$H(S_7|X,S_6)=H(S_7|S_6) -I(S_7;X|S_6) = 0$. This implies that $S_7$ does not
depend on random keys injected by nodeS $f$ or $\mbox{head}(4)$ which would be
independent of $X,S_6$. Similarly, $I(S_8;X)=0$, implying $H(S_7|S_8)=
I(S_7;X|S_8) = 1$ and $H(S_7|X,S_8) = 0$. Thus, $S_7$ does not depend on
random keys injected by node $\mbox{head}(2)$ which would be independent of
$X$ and $S_6$. In a similar manner, we can show that $S_6$ and $S_8$ also do
not depend on any random keys injected after the middle layer.  Also, since
$H(S_7,S_8|S_6)\ge I(S_7,S_8;X|S_6) = 2$ and $H(S_6)\ge H(S_6|S_8 )= 1$,
therefore $H(S_6,S_7,S_8)=3$.  Let $S_A $ denote the adversary's observations.
By the secrecy requirement, $H(S_6,S_7,S_8|S_A)= 2$, which implies
$I(S_6,S_7,S_8;S_A) =H(S_6,S_7,S_8) -H(S_6,S_7,S_8|S_A)= 1$.

Then, the mutual information $I(S_6;S_2,S_3)=0$, otherwise, if the adversary
sees signals 2-4 his mutual information with signals 6-7 is greater than 1.
The mutual information $I(S_8;S_1,S_4)=0$, otherwise if the adversary sees
signals 1, 2, 4 his mutual information with signals 7-8 is greater than 1.
The mutual information $I(S_8;S_4,S_5)=0$, otherwise if the adversary sees
signals 2, 4, 5 his mutual information with signals 7-8 is greater than 1.
The mutual information $I(S_7;S_4,S_5)=0$, otherwise if the adversary sees
signals 1, 4, 5, his mutual information with signals 7-8 is greater than 1.

Case 1: signal 5 is a function of only signals present at the source
node, i.e., $H(S_5|X,K)=0$. By the zero mutual information condition
for links 1, 2 and 5, $H(S_1,S_2,S_5|X)=3$, so
\begin{equation}
H(S_1,S_2,S_5|X,K)=H(S_1,S_2|X,K,S_5) =2.
\end{equation}
Since $S_4$ is conditionally independent of $S_1$, $S_2$ given $X$
and $K$, we have $H(S_1,S_2|X,K,S_4,S_5) =2$,
$I(S_1,S_2;X,K,S_4,S_5)=0$ and $I(S_1,S_2;X,K|S_4,S_5)=0$. Now
\begin{equation}
\begin{split}
I(S_1,S_2,S_7,S_8;X,K|S_4,S_5)=&I(S_7,S_8;X,K|S_4,S_5)+I(S_1,S_2;X,K|S_7,S_8,S_4,S_5)\\
=&I(S_1,S_2 ;X,K|S_4,S_5) +I(S_7,S_8;X,K|S_1,S_2,S_4,S_5).
\end{split}
\end{equation}
Since $S_7,S_8$ is conditionally independent of $X,K$ given
$S_1,S_2,S_4,S_5$, we have
\begin{equation}
I(S_7,S_8;X,K|S_1,S_2,S_4,S_5) = 0.
\end{equation}
Then by the non-negativity of conditional mutual information,
\begin{equation}
I(S_7,S_8;X,K|S_4,S_5) \leq I(S_1,S_2;X,K|S_4,S_5) = 0.
\end{equation}
Next, note that $S_1$ and $S_2$ are conditionally independent given
$S_4$ and $S_5$, since $H(S_1|S_4,S_5)=H(S_2|S_1,S_4,S_5) =1$.
Therefore $S_7$ and $S_8$ are conditionally independent given $S_4$
and $S_5$, i.e. $I(S_7;S_8|S_4,S_5)=0$. Since
$H(S_7|S_4,S_5)=H(S_7)-I(S_7;S_4,S_5)=1$, it follows that
$H(S_7|S_8,S_4,S_5)=1$.  Then we have
\begin{equation}
\begin{split}
I(S_7,S_8;S_4,S_5) =&I(S_8;S_4,S_5) +I(S_7;S_4,S_5|S_8)\\
=&I(S_8;S_4,S_5) +H(S_7|S_8)-H(S_7|S_4,S_5,S_8)=0+1-1 = 0.
\end{split}
\end{equation}
So,
$I(S_7,S_8;X,K,S_4,S_5)=I(S_7,S_8;X,K|S_4,S_5)+I(S_7,S_8;S_4,S_5) =
0$, and therefore $H(S_7,S_8|X)\geq H(S_7,S_8|X,K,S_4,S_5)=2$, which
contradicts the requirement that there is at most 1 unit of random key on the last layer.

Case 2: signal 5 is not a function only of signals present at the
source

Case 2a: signal 1 has nonzero mutual information with some random key injected at node $c$. Then $H(S_1|X,K,S_2,S_3,S_4) >0$. For
brevity, let $A=(S_2,S_3)$ and $Y=(X,K,S_4)$. Since $I(S_6;A) = 0$
and $H(S_6|S_1,A) = 0$, we have $H(A)+H(S_6)=H(A,S_6) \leq H(A,S_1)
=H(S_1)+H(A|S_1)$. Since $H(S_6)=H(S_1)$, we have $H(A)=H(A|S_1)$
and so $H(S_1|A)=H(S_1)$.  Then from
$H(S_1,S_6|A)=H(S_1|A,S_6)+H(S_6|A)=H(S_6|A,S_1)+H(S_1|A)$, we have
$H(S_1|A,S_6)=0$.  Since $H(S_1|A,Y,S_6) \leq H(S_1|A,S_6)= 0$ and
$H(S_6|A,Y,S_1) \leq H(S_6|A,S_1) = 0$, from
\begin{equation}
I(S_1;S_6| Y,A) =H(S_1|A,Y)
-H(S_1|A,Y,S_6)=H(S_6|A,Y)-H(S_6|A,Y,S_1)>0
\end{equation}
we have $H(S_6|A,Y) = H(S_1|A,Y)
>0$. Then since $H(S_7|S_2,S_4)=0$, we have $H(S_6|S_7,X)>0$.  Also, since
$H(S_7|X)=1$, we have $H(S_6,S_7|X)>1$.

Case 2b: signal 1 has zero mutual information with any random key
injected at node c. Then $H(S_5|X,K,S_1,S_2,S_4) >0$.  Similar
reasoning as for case 2a applies with $A=(S_1,S_4)$, $Y=(X,K,S_2)$,
$S_5$ in place of $S_1$, and $S_8$ in place of $S_6$.

From Cases 1, 2a, and 2b, we conclude that the secrecy rate without
knowledge of the wiretapping set by using any nonlinear or linear
coding strategy is smaller than two obtained for the case where such
knowledge is present at the source.


\bibliographystyle{IEEEtran} 
\bibliography{IEEEabrv,ref}

\end{document}